\def\wid{\check{{\cc@style\underline{\mskip9.5mu}}}}
\def\Wideubar{\underaccent{{\cc@style\underline{\mskip6mu}}}}
\def\wideubar{\underaccent{{\cc@style\underline{\mskip9.5mu}}}}
\def\Wideubar{\underaccent{{\cc@style\underline{\mskip6mu}}}}
\def\widebar{\accentset{{\cc@style\underline{\mskip9.5mu}}}}
\def\Widebar{\accentset{{\cc@style\underline{\mskip6mu}}}}
\newtheorem{lemma}{Lemma}
\newtheorem{theorem}{Theorem}
\theoremstyle{remark}\newtheorem{remark}{Remark}
\newcommand{\minimize}{{\rm minimize}}
\def\ccalH{{\ensuremath{\mathcal H}}}
\def\ccalT{{\ensuremath{\mathcal T}}}
\begin{document}
\title{Sparse Phase Retrieval via Truncated\\ Amplitude Flow
}

\author{
	Gang Wang,
	Liang Zhang,
	Georgios B. Giannakis,\\
		Mehmet Ak\c{c}akaya, and 
	Jie Chen\thanks{	The work of G. Wang, L. Zhang, and G. B. Giannakis was supported in part by NSF grants 1500713 and 1514056. 
	G. Wang, L. Zhang, G. B. Giannakis, and M. Ak\c{c}akaya are with the Digital Technology Center and the Department of Electrical and Computer Engineering, University of Minnesota, Minneapolis, MN 55455, USA. 
		G. Wang is also with the State Key Laboratory of Intelligent Control and Decision of Complex Systems, Beijing Institute of Technology, Beijing 100081, P. R. China.
		J. Chen is with the School of Automation and State Key Laboratory of Intelligent Control and Decision of Complex Systems, Beijing Institute of Technology, Beijing 100081, P. R. China. E-mails: \{gangwang,\,zhan3523,\,georgios,\,akcakaya\}@umn.edu; chenjie@bit.edu.cn.}
}

\maketitle

\allowdisplaybreaks

\begin{abstract}

This paper develops a novel algorithm, termed \emph{SPARse Truncated Amplitude flow} (SPARTA), to reconstruct a sparse signal from a small number of magnitude-only measurements. It deals with what is also known as sparse phase retrieval (PR), which is \emph{NP-hard} in general and emerges in many science and engineering applications. Upon formulating sparse PR as an amplitude-based nonconvex optimization task, SPARTA works iteratively in two stages: In stage one, the support of the underlying sparse signal is recovered using an analytically well-justified rule, and subsequently a sparse orthogonality-promoting initialization is obtained via power iterations restricted on the support; 
and, in the second stage, the initialization is successively refined by means of hard thresholding based gradient-type iterations. SPARTA is a simple yet effective, scalable, and fast sparse PR solver. On the theoretical side, for any $n$-dimensional $k$-sparse ($k\ll n$) signal $\bm{x}$ with minimum (in modulus) nonzero entries on the order of $(1/\sqrt{k})\|\bm{x}\|_2$, SPARTA recovers the signal exactly (up to a global unimodular constant) from about $k^2\log n$ random Gaussian measurements with high probability. Furthermore, SPARTA incurs computational complexity on the order of $k^2n\log n$ with total runtime proportional to the time required to read the data, which improves upon the state-of-the-art by at least a factor of $k$. Finally, SPARTA is robust against additive noise of bounded support. Extensive numerical tests corroborate markedly improved recovery performance and speedups of SPARTA relative to existing alternatives. 

\end{abstract}

\begin{keywords}
Nonconvex optimization, support recovery, iterative hard thresholding, compressive sampling,
linear convergence.
\end{keywords}

\section{Introduction}\label{sec:intro}

In many fields of engineering and applied physics, one is often tasked with reconstructing a signal from the (squared) modulus of its Fourier (or any linear) transform, which is also known as phase retrieval (PR). Such a task arises naturally in applications such as X-ray crystallography, microscopy and ptychography, astronomy, optics, as well as array and coherent diffraction imaging. In these settings, optical sensors and detectors such as charge-coupled device cameras, photosensitive films, and human eyes record only the intensity (squared magnitude) of a light wave, but not the phase. In particular, solution to PR has led to significant accomplishments, including the discovery in $1953$ of DNA double helical structure from diffraction patterns, and the characterization of aberrations in the Hubble Space Telescope from measured point spread functions~\cite{spm2016ehm}.
Due to the absence of Fourier phase information, the one-dimensional ($1$D) Fourier PR problem is generally ill-posed. It can be shown that there are in fact exponentially many non-equivalent solutions beyond trivial ambiguities in the $1$D PR case~\cite{1duniqueness}. A common approach to overcome this ill-posedness is exploiting additional information on the unknown signal such as non-negativity, sparsity, or bounded magnitude~\cite{gespar,1982fienup,tsp2015njs}. Other viable solutions consist of introducing redundancy into the measurement transforming system to obtain over-sampled and short-time Fourier transform (STFT) measurements~\cite{arxiv2016be}, random Gaussian measurements~\cite{wf,twf,taf}, and coded diffraction patterns using structured illumination and random masks~\cite{siam2015candes,coded,wf}, just to name a few; see~\cite{siam2015candes} for contemporary reviews on the theory and practice of PR. 

Past PR approaches can be mainly categorized as convex and nonconvex ones.
A popular class of nonconvex approaches is based on alternating projections including the seminal works by Gerchberg-Saxton~\cite{gerchberg} and Fienup~\cite{1982fienup},~\cite{2015chen1}, \cite{wald2}, alternating minimization with re-sampling (AltMinPhase)~\cite{tsp2015njs}, (stochastic) truncated amplitude flow (TAF)~\cite{nips2016wg,taf,staf,spl2016cl,raf} and the Wirtinger flow (WF) variants~\cite{wf,twf,reshaped1,tsp2017chi}, trust-region~\cite{sun2016}, (stochastic) proximal linear algorithms \cite{duchi2017,2017stochastic}. See also related discussion in~\cite{siam2015candes,spm2016ehm,2017mm,2017cr,variational2017chang,raf1,li2017sublinear,tsp2016qian}. Specifically, the WF variants and the trust-region methods minimize the intensity (modulus squared) based empirical risk, while AltMinPhase and TAF cope with the amplitude-based empirical risk.  
The convex alternatives either rely on the so-called Shor's relaxation to obtain semidefinite programming (SDP) based solvers 
abbreviated as PhaseLift~\cite{phaselift} and PhaseCut~\cite{phasecut}, or solve a basis pursuit problem in the dual domain as in PhaseMax~\cite{phasemax,phasemaxproof,2017dl}. 

Nevertheless, in various applications, especially those related to imaging, the underlying signal is naturally sparse or admits a sparse representation after some known and deterministic linear transformation~\cite{spm2016eldar}.  
For example, astronomical imaging centers around sparsely distributed stars, while electron microscopy deals with sparsely distributed atoms or molecules. As PR of sparse signals is of practical relevance, SDP, AltMinPhase, and WF recovery methods have been generalized to sparse PR producing
solvers termed compressive phase retrieval via lifting (CPRL)~\cite{cprl}, sparse AltMinPhase~\cite{tsp2015njs},  thresholded Wirtinger flow (TWF)~\cite{as2016clm}, SparsePhaseMax \cite{sparsephasemax}.
 CPRL in particular, accounts for the sparsity by adding an $\ell_1$-regularization term on the wanted signal to the original PhaseLift formulation. 
The other two approaches are two-stage iterative counterparts consisting of a (sparse) initialization, and a series of refinements of the initialization with gradient-type iterations. 
The greedy sparse phase retrieval (GESPAR) algorithm is based on a fast $2$-opt local search~\cite{gespar}. A probabilistic approach is developed based on the generalized approximate message passing (GAMP) algorithm~\cite{tsp2015sr}. Majorization-minimization algorithms are devised in \cite{tsp2016qp}.  
Assuming noise-free Gaussian random measurements, CPRL recovers any $k$-sparse $n$-dimensional ($k\ll n$) 
signal exactly from \footnote{The notation $\phi(n)=\mathcal{O}(g(n))$ means that there is a constant $c>0$ such that $|\phi(n)|\le c|g(n)|$.} $\mathcal{O}(k^2\log n )$ measurements at computational complexity $\mathcal{O}(n^3)$~\cite{sjam2013lv}. 
Sparse AltMinPhase and TWF, on the other hand, require $\mathcal{O}(k^2\log n)$ measurements \cite{tsp2015njs,as2016clm}, and SparseAltMinPhase incurs complexity  
$\mathcal{O}(k^2n\log n)$~\cite{tsp2015njs}.


Building on TWF and TAF, we propose here a novel sparse PR algorithm, which we call 
\emph{SPARse Truncated Amplitude flow} (SPARTA). Adopting an amplitude-based nonconvex formulation of the sparse PR, SPARTA emerges as a two-stage iterative solver: In stage one, the support of the underlying signal is estimated first using a well-justified rule, 
and subsequently power iterations are employed to obtain an initialization restricted on the recovered support; 
while the second stage successively refines the initialization with a series of
hard thresholding based truncated gradient iterations.
Both stages are conceptually simple, scalable, and fast. 
Moreover, we demonstrate that SPARTA recovers any $k$-sparse $n$-dimensional real-/complex-valued signal $\bm{x}$
($k\ll n$) with minimum nonzero entries (in modulus) on the order of $(1/\sqrt{k})\|\bm{x}\|_2$
from $\mathcal{O}(k^2\log n)$ measurements.
Further, to reach any given solution accuracy $\epsilon>0$, 
SPARTA incurs total computational cost of $\mathcal{O}(k^2n\log n \log(1/\epsilon))$, which improves upon the state-of-the-art by at least a factor of $k$. This computational advantage is paramount in large-scale imaging applications, where the basis factor $n\log n$ is large, 
typically on the order of millions.
In addition, SPARTA can be shown robust to additive noise of bounded support. Extensive simulated tests demonstrate markedly improved exact recovery performance (in the absence of noise), robustness to noise, and runtime speedups relative to the state-of-the-art algorithms.

The remainder of this paper is organized as follows. Section \ref{sec:prob} reviews the sparse PR problem, and also presents known necessary and sufficient conditions for uniqueness. Section~\ref{sec:alg} details the two stages of the proposed algorithm, whose analytic performance analysis is the
subject of Section~\ref{sec:main}. 
Finally, numerical tests are reported in~Section~\ref{sec:test}, proof details are given in~Section~\ref{sec:proof}, and conclusions are drawn in Section~\ref{sec:con}. Supporting lemmas are presented in the Appendix. 

Regarding common notation used throughout the paper, lower- (upper-) case boldface letters denote column vectors (matrices) of suitable dimensions, and symbol $\ccalT$ ($\ccalH$) as superscript stands for matrix/vector transposition (conjugate transposition). Calligraphic letters are reserved for sets, e.g., $\mathcal{S}$. For vectors, $\|\!\cdot\!\|_2$ represents the Euclidean norm, while $\|\!\cdot\!\|_0$ denotes the $\ell_0$ pseudo-norm counting the number of nonzero entries. Finally, the ceiling operation $\lceil\cdot\rceil$ returns the smallest integer greater than or equal to the given number, and the cardinality $|\mathcal{S}|$ reports the number of elements in the set $\mathcal{S}$.     

\section{Sparse Phase Retrieval}\vspace{-.em}\label{sec:prob}

Succinctly stated, the sparse PR task amounts to reconstructing a sparse $\bm{x}\in\mathbb{R}^n$ 
(or $\mathbb{C}^n$) 
given a system of phaseless quadratic equations taking the form~\cite{spie2007mrb}
\begin{equation}\label{eq:quad}
\psi_i=|\langle\bm{a}_i,\bm{x}\rangle|,\quad 1\le i \le m,\qquad\text{subject~to}\quad\|\bm{x}\|_0\le k
\end{equation}
where $\{\psi_i\}_{i=1}^m$ are the observed modulus data, and $\{\bm{a}_i\}_{i=1}^m$ are known sensing (feature) vectors. 
The sparsity level $ k\ll n$ is assumed known \emph{a priori} for theoretical analysis purposes, 
while numerical implementations with unknown $k$ values will be tested as well.  
Alternatively, the data can be given in modulus squared (i.e., intensity) form as $\{y_i=|\langle\bm{a}_i,\bm{x}\rangle|^2\}_{i=1}^m$. 
It has been established that $m=2k$ generic \footnote{It is not within the scope of this paper to explain the meaning of generic vectors. Interested readers are referred to~\cite{4m-4}.} (e.g., random Gaussian) measurements as in \eqref{eq:quad} are necessary and sufficient for uniquely determining a $k$-sparse solution in the real case, and $m\ge 4k-2$ are sufficient in the complex case~\cite{spl2015at}. In the noisy scenario, stable compressive PR requires at least as many measurements as the corresponding compressive sensing problem since one is tasked with even less (no phase) information. 
Hence, stable sparse PR requires at least  
$\mathcal{O}(k\log(n/k))$ measurements as in compressive sensing~\cite{acha2015ivw}. Indeed, it has been recently demonstrated that $\mathcal{O}(k\log(n/k))$ generic measurements also suffice for stable  PR of a real-valued sparse signal~\cite{acha2014yesm}. 

For concreteness of our analytical results, the present paper focuses on the real-valued Gaussian model, which assumes independently and identically distributed (i.i.d.) standard Gaussian sensing vectors $\bm{a}_i\sim\mathcal{N}(\bm{0},\,\bm{I}_n)$, $i=1,\,\ldots,\,m$, and $\bm{x}\in\mathbb{R}^n$. 
Nevertheless, our proposed algorithm works also for the complex-valued Gaussian model with $\bm{x}\in\mathbb{C}^n$ and i.i.d. $\bm{a}_i\sim\mathcal{CN}(\bm{0},\bm{I}_n):=\mathcal{N}(\bm{0},\bm{I}_n/2)+j\mathcal{N}(\bm{0},\bm{I}_n/2)$.
Given $\{(\bm{a}_i,\psi_i)\}_{i=1}^m$ and assuming also the existence of a unique $k$-sparse solution (up to a global sign), our objective is to develop simple yet effective algorithms to provably reconstruct any $k$-sparse $n$-dimensional signal $\bm{x}$ from a small number (far less than $n$) of phaseless quadratic equations as in \eqref{eq:quad}. 

Adopting the least-squares criterion (which coincides with the maximum likelihood one when assuming additive white Gaussian noise in \eqref{eq:quad}), the problem of recovering a $k$-sparse solution from phaseless quadratic equations naturally boils down to that of minimizing the ensuing amplitude-based empirical loss function\vspace{-.em}
\begin{equation}\vspace{-.em}
\label{eq:cost}
	\underset{\|\bm{z}\|_0= k}{\text{minimize}}~~\ell(\bm{z}):=\frac{1}{2m}\sum_{i=1}^m\left(\psi_i-|\bm{a}_i^\ccalT\bm{z}|\right)^2.
\end{equation}  

Clearly, 
both the objective function and the $\ell_0$-norm constraint in~\eqref{eq:cost}
are nonconvex, which render the optimization problem \emph{NP-hard} in general~\cite{nphard}, and thus computationally intractable. Besides nonconvexity, another notable challenge here involves the non-smoothness of the cost function. 
It is worth emphasizing that (thresholded) Wirtinger alternatives dealt with the smooth counterpart of \eqref{eq:cost} based on squared magnitudes $\{y_i=|\bm{a}_i^\ccalT\bm{z}|^2\}_{i=1}^m$, 
which was numerically and experimentally shown to be less effective than the amplitude-based one even when no sparsity is exploited~\cite{taf,experimental2015}. 
Although focusing on a formulation similar to (but different than)~\eqref{eq:cost}, 
sparse AltMinPhase first estimates the support of the underlying signal, and performs standard PR of signals with dimension $k$. More importantly, sparse AltMinPhase relying on alternating minimization with re-sampling entails solving a series of least-squares problems, and performs matrix inversion at every iteration. Numerical tests suggest that a very large number of measurements are 
required to estimate the support exactly. Once wrong, sparse AltMinPhase confining the PR task on the estimated support would be impossible to recover the underlying sparse signal. On the other hand, motivated by the iterative hard thresholding (IHT) algorithms for compressive sensing~\cite{acha2009bd,acha2009nt}, 
an adaptive hard thresholding procedure that maintains only certain largest entries per iteration during the gradient refinement stage turns out to be effective~\cite{as2016clm}.   
Yet both sparse AltMinPhase and TWF were based on the simple spectral initialization, which was recently shown to be less accurate and robust than the orthogonality-promoting initialization~\cite{taf}. 

Broadening the TAF approach and the sparse PR solver TWF, the present paper puts forth a novel iterative solver for~\eqref{eq:cost} that proceeds in two stages: S1) a sparse orthogonality-promoting initialization is obtained by solving a PCA-type problem with a few simple power iterations on an estimated support of the underlying sparse signal; 
and, S2) successive refinements of the initialization are effected by means of a series of truncated gradient iterations along with a hard thresholding per iteration to set all entries to zero, except for the $k$ ones of largest magnitudes. The two stages are presented in order next.

\section{Algorithm: Sparse Truncated Amplitude Flow}
\label{sec:alg}
In this section, the initialization stage and the gradient refinement stage of SPARTA will be described in detail.
  To begin, let us introduce the distance from any estimate $\bm{z}\in\mathbb{R}^n$ to the solution set $\{\pm\bm{x}\}\subseteq\mathbb{R}^n$ to be ${\rm dist}(\bm{z},\bm{x}):=\min\{\|\bm{z}+\bm{x}\|_2,\|\bm{z}-\bm{x}\|_2\}$.  
 Define also the indistinguishable global phase constant in the real case as 
 \begin{equation}\vspace{-.em}
\label{eq:sign}
 \phi(\bm{z}):=\left\{\begin{array}
 	{ll}
 	0,&~\|\bm{z}-\bm{x}\|_2\le \|\bm{z}+\bm{x}\|_2,\\
 	\pi,&~\text{otherwise}.
 \end{array}
 \right.
 \end{equation}
 Hereafter, assume $\bm{x}$ to be
   the fixed solution to problem~\eqref{eq:quad} with $\phi(\bm{z})=0$; otherwise, one can replace $\bm{z}$ by $\bm{z}\text{e}^{i\phi}$, but the constant phase shift shall be dropped for notational brevity.
 Assume also without loss of generality that $\|\bm{x}\|_2=1$, which will be justified and generalized shortly.  

\vspace{-.em}
\subsection{Sparse Orthogonality-promoting Initialization}\vspace{-.em}
\label{sub:init}
When no sparsity is exploited, 
the orthogonality-promoting initialization proposed in~\cite{taf} 
starts with a popular folklore in stochastic geometry: High-dimensional random vectors are almost always nearly orthogonal to each other~\cite{jmlr2013cai}. 
The key idea is approximating the unknown $\bm{x}$ by another vector that is most orthogonal to a carefully chosen subset of sensing vectors $\{\bm{a}_i\}_{i\in\mathcal{I}^0}$, where $\mathcal{I}^0\subseteq [m]:=\{1,\,2,\,\ldots,\,m\}$ is some index set to be designed next. 
It is well known that the orthogonality between two vectors can be interpreted by their squared normalized inner-product  $(\bm{a}_i^\ccalT\bm{x})^2/(\|\bm{a}_i\|_2^2\|\bm{x}\|_2^2)$. Intuitively, the smaller the squared normalized inner-product between two vectors $\bm{a}_i$ and $\bm{x}$ is, the more orthogonal they are to each other. Upon evaluating the inner-product between each $\bm{a}_i$ and $\bm{x}$ for all pairs $\{(\bm{a}_i,\bm{x})\}_{i=1}^m$, one can construct $\mathcal{I}^0$ to include the indices of $\{\bm{a}_i\}$'s corresponding to the $|\mathcal{I}^0|$-smallest squared normalized inner-products with $\bm{x}$. Therefore, it is natural to approximate $\bm{x}$ by computing a vector $\bm{z}^0$ most orthogonal to the set $\mathcal{I}^0$ of sensing vectors~\cite{taf}. Mathematically, 
this is equivalent to solving a smallest eigenvector (defined to be the eigenvector associated with the smallest eigenvalue of a symmetric positive definite matrix) problem
\begin{equation}
	\label{eq:mineig}
	\underset{\|\bm{z}\|_2=1}{\text{minimize}}
	~~\bm{z}^\ccalT\bm{Y}	\bm{z}:=\bm{z}^\ccalT
	\Big(\frac{1}{|\mathcal{I}^0|}\sum_{i\in\mathcal{I}^0}\frac{\bm{a}_i\bm{a}_i^\ccalT}{\|\bm{a}_i\|_2^2}\Big)
	\bm{z}.
\end{equation}

The smallest eigenvalue (eigenvector) problem can be solved by fully eigen-decomposing the matrix $\frac{1}{|\mathcal{I}^0|}\sum_{i\in\mathcal{I}^0}\frac{\bm{a}_i\bm{a}_i^\ccalT}{\|\bm{a}_i\|_2^2}$ at computational complexity $\mathcal{O}(n^3)$ (assuming $|{\mathcal{I}}^0|$ to be on the order of $n$). Upon defining $\widebar{\mathcal{I}}^0$ to be the complement of the set $\mathcal{I}^0$ in $[m]$, one can rewrite
$
\sum_{i\in\mathcal{I}^0}\frac{\bm{a}_i\bm{a}_i^\ccalT}{\|\bm{a}_i\|_2^2}=\sum_{i\in [m]}\frac{\bm{a}_i\bm{a}_i^\ccalT}{\|\bm{a}_i\|_2^2}
-\sum_{i\in\widebar{\mathcal{I}}^0}\frac{\bm{a}_i\bm{a}_i^\ccalT}{\|\bm{a}_i\|_2^2}.
$
Recall that for i.i.d. standard Gaussian sensing vectors $\{\bm{a}_i\sim\mathcal{N}(\bm{0},\bm{I}_n)\}_{i=1}^m$,  
the following concentration result holds~\cite{chap2010vershynin}
\begin{equation}
\label{eq:concen}
\frac{1}{m}\sum_{i=1}^m\frac{\bm{a}_i\bm{a}_i^\ccalT}{\|\bm{a}_i\|_2^2}\approx \mathbb{E}\Big[\frac{\bm{a}_i\bm{a}_i^\ccalT}{\|\bm{a}_i\|_2^2}\Big]=\frac{1}{n}\bm{I}_n
\end{equation}
where $\mathbb{E}[\cdot]$ denotes the expected value.
It follows from~\eqref{eq:concen} that
the smallest eigenvector problem in~\eqref{eq:mineig} can be approximated by the largest (principal) eigenvector 
\begin{equation}
\label{eq:maxeig}
	\tilde{\bm{z}}^0:=\arg	\underset{\|\bm{z}\|_2=1}{\text{max}}~~
\bm{z}^\ccalT\widebar{\bm{Y}}\bm{z}
:=\bm{z}^\ccalT\Big(\frac{1}{|\widebar{\mathcal{I}}^0|}\sum_{i\in\widebar{\mathcal{I}}^0}\frac{\bm{a}_i\bm{a}_i^\ccalT}{\|\bm{a}_i\|_2^2}\Big)
	\bm{z}
\end{equation}
whose solution can be well approximated with a few (e.g., $100$) power iterations at a much cheaper computational complexity  $\mathcal{O}(n|\widebar{\mathcal{I}}^0|)$ [than $\mathcal{O}(n^3)$ required for solving \eqref{eq:mineig}]. 
When $\|\bm{x}\|_2\ne 1$ is unknown, $\tilde{\bm{z}}^0$ from~\eqref{eq:maxeig} can be scaled by the norm estimate of $\bm{x}$ to obtain $\bm{z}^0=\sqrt{\sum_{i=1}^m y_i/m}~\tilde{\bm{z}}^0$~\cite{wf,taf}. If $m/n$ is large enough, it has been shown that the orthogonality-promoting initialization can produce an estimate of any given constant relative error~\cite{taf}. 

When $\bm{x}$ is \emph{a priori} known to be $k$-sparse with $k\ll n$, one may expect to recover $\bm{x}$ from a significantly smaller number ($\ll n$) of measurements. The orthogonality-promoting initialization (and spectral based alternatives) requiring $m$ to be on the order of $n$ would fail in the case of PR for sparse signals given a small number of measurements~\cite{tsp2015njs,wf,twf,taf,reshaped1}. By accounting for the sparsity prior information with the $\ell_0$ regularization, the same rationale as the orthogonality-promoting initialization in~\eqref{eq:mineig} would lead to 
\begin{equation}
\label{eq:spca}
		\underset{\|\bm{z}\|_2=1}{\minimize}~~
\bm{z}^\ccalT\bm{Y}
	\bm{z}\quad
	{\rm subject~to}~~\|\bm{z}\|_0=k.
\end{equation}
The problem at hand is \emph{NP-hard} in general due to the combinatorial constraint. Additionally, it can not be readily converted to a (sparse) PCA problem since the number of data samples available is much smaller than the signal dimension $n$, thus hardly validating the non-asymptotic result in \eqref{eq:concen}. Although at much higher computational complexity than power iterations, semidefinite relaxation could be applied~\cite{siam2007agjl}. Instead of coping with~\eqref{eq:spca} directly, we shall take another route and develop our sparse orthogonality-promoting initialization approach to obtain a meaningful sparse initialization from the given limited number of measurements.

\subsubsection{Exact support recovery}

Along the lines of sparse AltMinPhase and sparse PCA~\cite{as2009am}, our approach is to first estimate the support of the underlying signal based on a carefully-designed rule; next, we will rely on power iterations to solve~\eqref{eq:maxeig} restricted on the estimated support, thus ensuring a $k$-sparse estimate $\tilde{\bm{z}}^{0}\in\mathbb{R}^n$; and, subsequently we will scale $\tilde{\bm{z}}^{0}$ by the $\bm{x}$ norm estimate $\sqrt{\sum_{i=1}^m y_i/m}$ to yield a $k$-sparse orthogonality-promoting initialization $\bm{z}^0$.

Starting with the support recovery procedure, assume without loss of generality that $\bm{x}$ is supported on $\mathcal{S}\subseteq [n]:=\{1,\,\ldots,\,n\}$ with $|\mathcal{S}|=k\ll n$. Consider the random variables $Z_{i,j}:=\psi_i^2a_{i,j}^2$, $j=1,\ldots,n$.
Recalling that for standardized Gaussian variables, we have $\mathbb{E}[a_{i,j}^4]=3$, $\mathbb{E}[a_{i,j}^2]=1$, the rotational invariance property of Gaussian distributions confirms for all $1\le j\le n$ that
\begin{align}
\label{eq:exp}
	\mathbb{E}[Z_{i,j}]=\mathbb{E}\big[(\bm{a}_i^\ccalT\bm{x})^2a_{i,j}^2\big]&=\mathbb{E}\big[a_{i,j}^4 x_j^2+
(\bm{a}_{i,/j}^\ccalT\bm{x}_{/j})^2a_{i,j}^2\big]\nonumber\\
&=3x_j^2+\|\bm{x}_{/j}\|_2^2\nonumber\\
&=2x_j^2+\|\bm{x}\|_2^2
\end{align}
where $\bm{x}_{/j}\in\mathbb{R}^{n-1}$ is obtained by deleting the $j$-th entry from $\bm{x}\in\mathbb{R}^n$; and likewise for $\bm{a}_{i,/j}\in\mathbb{R}^{n-1}$. 
If $j\in \mathcal{S}$, then $x_j\ne 0$ yielding   
$\mathbb{E}[Z_{i,j}]=\|\bm{x}\|_2^2+2x_j^2$ in~\eqref{eq:exp}. If on the other hand $j\notin \mathcal{S}$, it holds that $x_j=0$, which leads to $\mathbb{E}[Z_{i,j}]=\|\bm{x}_{/j}\|_2^2=\|\bm{x}\|_2^2$. It is now clear that there is a separation of $2x_j^2$ in the expected values of $Z_{i,j}$ for $j\in \mathcal{S}$ and $j\notin \mathcal{S}$. As long as the gap $2x_j^2$ is sufficiently large, the support set $\mathcal{S}$ can be recovered exactly in this way. Specifically, when all $\mathbb{E}[Z_{i,j}]$ values are available, the set of indices corresponding to the $k$-largest $\mathbb{E}[Z_{i,j}]$ values recover exactly the support of $\bm{x}$. In practice, $\{\mathbb{E}[Z_{i,j}]\}$ are not available.  
One has solely access to a number of their independent realizations. Appealing to the strong law of large numbers, the sample average approaches the ensemble one, namely, $\hat{Z}_{i,j}:=(1/m)\sum_{i=1}^m Z_{i,j}\to \mathbb{E}[Z_{i,j}]$ as $m$ increases. Hence, the support can be estimated as  
\begin{equation}\label{eq:suppest}
	\hat{\mathcal{S}}:=\big\{1\le j\le n\big|\text{indices of top-$k$ instances in $\{\hat{Z}_{i,j}\}_{j=1}^n$
}\big\}
\end{equation}
which will be shown to recover $\mathcal{S}$ exactly with high probability provided that $\mathcal{O}(k^2\log n)$ measurements are taken and the minimum nonzero entry $x_{\min}:=\min_{j\in S}|x_j|$ is on the order of $(1/\sqrt{k})\|\bm{x}\|_2$. The latter is postulated to guarantee such a separation between quantities having their indices belonging or not belonging to the support set.
It is worth stressing that $k^2\log n\ll n$ when $k\ll n$, hence largely reducing the sampling size and also the computational complexity.

%
%

\subsubsection{Orthogonality-promoting intialization}
When the estimated support in~\eqref{eq:suppest} turns out to be exact, i.e., $\hat{\mathcal{S}}=\mathcal{S}$, 
one can rewrite $\psi_i=|\bm{a}_i^\ccalT\bm{x}|=|\bm{a}_{i,\hat{\mathcal{S}}}^\ccalT\bm{x}_{\hat{\mathcal{S}}}|$,  $i=1,\,\ldots,\,m$, where $\bm{a}_{i,\hat{\mathcal{S}}}\in\mathbb{R}^{k}$ includes the $j$-th entry $a_{i,j}$ of $\bm{a}_i$ if and only if $j\in \hat{\mathcal{S}}$; and likewise for $\bm{x}_{\hat{\mathcal{S}}}\in\mathbb{R}^{k}$. Instead of seeking directly an $n$-dimensional initialization as in~\eqref{eq:spca}, one can apply the orthogonality-promoting initialization steps in~\eqref{eq:mineig}-\eqref{eq:maxeig} on the dimensionality reduced data $\{(\bm{a}_{i,\hat{\mathcal{S}}},\psi_i)\}_{i=1}^m$ to produce a $k$-dimensional vector
\begin{equation}
\label{eq:maxeigk}
	\tilde{\bm{z}}_{\hat{\mathcal{S}}}^0:=\arg	\underset{\|\bm{z}_{\hat{\mathcal{S}}}\|_2=1}{\text{max}}~~
\frac{1}{|\widebar{\mathcal{I}}^0|}\bm{z}_{\hat{\mathcal{S}}}^\ccalT\Big(\sum_{i\in\widebar{\mathcal{I}}^0}\frac{\bm{a}_{i,\hat{\mathcal{S}}}\bm{a}_{i,\hat{\mathcal{S}}}^\ccalT}{\|\bm{a}_{i,\hat{\mathcal{S}}}\|_2^2}\Big)
	\bm{z}_{\hat{\mathcal{S}}}
\end{equation}
and subsequently reconstruct a $k$-sparse $n$-dimensional initialization $\tilde{\bm{z}}^0$ by zero-padding $\tilde{\bm{z}}_{\hat{\mathcal{S}}}^0$ at entries with indices not belonging to $\hat{\mathcal{S}}$.
Similarly, in the case of $\|\bm{x}\|_2\ne 1$, $\tilde{\bm{z}}^0$ in~\eqref{eq:maxeigk} is rescaled by the norm estimate of $\bm{x}$ to obtain $\bm{z}^0=\sqrt{\sum_{i=1}^m y_i/m}\,\tilde{\bm{z}}^0$. 
We also note that our proposed algorithm can recover the underlying sparse signal when $\hat{\mathcal{S}} \neq \mathcal{S}$, as long as $\bm{z}^0$ is sufficiently close to $\bm{x}$ regardless of support mismatch, which is described further in Lemma \ref{le:thresh}.

\begin{algorithm}[t]
  \caption{SPARse Truncated Amplitude flow (SPARTA)}
  \label{alg:SPARTA}
  \begin{algorithmic}[1]
\STATE {\bfseries Input:}
Data $\{(\bm{a}_i;\psi_i)\}_{i=1}^m$ and sparsity level $k$; 
 maximum number of iterations $T=1,000$; step size $\mu=1$, truncation thresholds $|\widebar{\mathcal{I}}^0|=\lceil\frac{1}{6}m\rceil$, 
and $\gamma=1$.
\STATE{\bfseries Set} $\hat{\mathcal{S}}$ to include indices 
corresponding to the $k$-largest instances in $\big\{\sum_{i=1}^m\psi_i^2|a_{i,j}|^2/m\big\}_{j=1}^n$.
\STATE {\bfseries Evaluate}\label{step:3}
 $\widebar{\mathcal{I}}^0$ to consist of indices of the top-$|\widebar{\mathcal{I}}^0|$ values in $\{\psi_i/\|\bm{a}_{i,\hat{\mathcal{S}}}\|_2\}_{i=1}^m$ with $\bm{a}_{i,\hat{\mathcal{S}}}\in\mathbb{R}^k$ removing entries of $\bm{a}_i\in\mathbb{R}^n$ not belonging to $\hat{S}$; and compute the principal eigenvector $\tilde{\bm{z}}^0_{\hat{\mathcal{S}}}\in\mathbb{R}^k$ of matrix $$\bm{Y}:=\frac{1}{|\widebar{\mathcal{I}}^0|}
\sum_{i\in\widebar{\mathcal{I}}^0}\frac{\bm{a}_{i,\hat{\mathcal{S}}}\bm{a}_{i,\hat{\mathcal{S}}}^\ccalT}{\|\bm{a}_{i,\hat{\mathcal{S}}}\|_2^2}$$ based on $100$ power iterations.
\STATE {\bfseries Initialize}
\label{step:4} $\bm{z}^0$ as $\sqrt{\sum_{i=1}^m\psi_i^2/m}\,\tilde{\bm{z}}^0$, where $\tilde{\bm{z}}^0\in\mathbb{R}^n$ is obtained by augmenting $\tilde{\bm{z}}^0_{\hat{\mathcal{S}}}$ in Step $3$ with zeros at entries with their indices not in $\hat{\mathcal{S}}$.
  \STATE {\bfseries Loop: For}\label{step:5} 
  {$t=0$ {\bfseries to} $T-1$}\\
\vspace{-.em}
{ \begin{equation*}\vspace{-.em}
	   	\bm{z}^{t+1}=\mathcal{H}_k\!\bigg(\bm{z}^t-\frac{\mu}{m}\sum_{i\in\mathcal{I}^{t+1}}\Big(\bm{a}_i^\ccalT\bm{z}^t-\psi_i\frac{\bm{a}_i^\ccalT\bm{z}^t}{|\bm{a}_i^\ccalT\bm{z}^t|}\Big)\bm{a}_i\bigg)
\end{equation*}
   where $\mathcal{I}^{t+1}=\left\{1\le i\le m\big|{|\bm{a}_i^\ccalT\bm{z}^t|}\ge \psi_i\big/(1+\gamma)\right\}$, and $\mathcal{H}_k(\bm{u}):\mathbb{R}^n\to \mathbb{R}^n$ sets all entries of $\bm{u}$ to zero except for the $k$-ones of largest magnitudes. }
     \STATE {\bfseries Output:}
$\bm{z}^{T}$.
  \end{algorithmic}
\vspace{-.em}
\end{algorithm} 


\vspace{-.em}
\subsection{Thresholded Truncated Gradient Stage}
\label{sub:refine}

Upon obtaining a sparse orthogonality-promoting initialization $\bm{z}^0$, our approach to solving~\eqref{eq:cost} boils down to iteratively refining $\bm{z}^0$ by means of a series of $k$-sparse hard thresholding based truncated gradient iterations, namely, 
\begin{equation}
\label{eq:iteration}
	\bm{z}^{t+1}:=\mathcal{H}_k\!\left(\bm{z}^t-\mu\nabla\ell_{\rm tr}(\bm{z}^t)
	\right),\quad t=0,\,1,\,\ldots
\end{equation}
where $t$ is the iteration index, $\mu>0$ a constant step size, and $\mathcal{H}_k(\bm{u}):\mathbb{R}^n\to \mathbb{R}^n$ denotes a $k$-sparse hard thresholding operation that sets all entries in $\bm{u}$ to zero except for the $k$ entries of largest magnitudes. If there are multiple such sets comprising the $k$-largest entries, a set can be chosen either randomly or according to a predefined ordering of the elements. %
Similar to~\cite{taf}, the truncated (generalized) gradient $\nabla\ell_{\rm tr}(\bm{z}^t)$ is 
\begin{equation}\vspace{-.em}
	\label{eq:tgg}
	\nabla\ell_{\rm tr}(\bm{z}^t):=\frac{1}{m}\sum_{i\in\mathcal{I}^{t+1}}\Big(\bm{a}_i^\ccalT\bm{z}^t-\psi_i\frac{\bm{a}_i^\ccalT\bm{z}^t}{|\bm{a}_i^\ccalT\bm{z}^t|}\Big)\bm{a}_i
\end{equation}
where the index set is defined to be
\begin{equation}
\label{eq:trunc}
\mathcal{I}^{t+1}:=\Big\{1\le i\le m\Big|\frac{|\bm{a}_i^\ccalT\bm{z}^t|}{|\bm{a}_i^\ccalT\bm{x}|}\ge \frac{1}{1+\gamma}\Big\}	
\end{equation}
for some $\gamma>0$ to be determined shortly, where $\{|\bm{a}_i^\ccalT\bm{x}|=\psi_i\}$ are the given modulus data.

 It is clear now that the difficulty of minimizing our nonconvex objective function reduces to that of correctly estimating the signs of $\bm{a}_i^\ccalT\bm{x}$ by ${\bm{a}_i^\ccalT\bm{z}^t}/{|\bm{a}_i^\ccalT\bm{z}^t|}$ at each iteration. 
The truncation rule in \eqref{eq:trunc} was shown capable of eliminating most ``bad'' gradient components involving erroneously estimated signs, i.e., ${\bm{a}_i^\ccalT\bm{z}^t}/{|\bm{a}_i^\ccalT\bm{z}^t|}\ne {\bm{a}_i^\ccalT\bm{x}}/{|\bm{a}_i^\ccalT\bm{x}|}$. This rule improved performance of TAF~\cite{taf} considerably. 
Recall that our objective function in~\eqref{eq:cost} is also non-smooth at points $\bm{z}\in\mathbb{R}^n$ obeying $\bm{a}_i^\ccalT\bm{z}=0$. 
Evidently, the gradient regularization rule in~\eqref{eq:trunc} keeps only the gradients of component functions (i.e., the summands in~\eqref{eq:cost}) that bear large enough $|\bm{a}_i^\ccalT\bm{z}^t|$ values; this rule thus maintains $\bm{a}_i^\ccalT\bm{z}^t$ away from $0$ and protects the cost function in~\eqref{eq:cost} from being non-smooth at points satisfying $\bm{a}_i^\ccalT\bm{z}=0$. As a consequence, the (truncated) generalized gradient employed in~\eqref{eq:tgg} reduces to the (truncated) gradient at such points, which also simplifies theoretical convergence analysis. 

\section{Main Results} 
\label{sec:main}
The proposed sparse phase retrieval solver is summarized in Algorithm~\ref{alg:SPARTA} along with default parameter values. 
Given data samples $\{(\bm{a}_i;\psi_i)\}_{i=1}^m$ generated from i.i.d. $\{\bm{a}_i\}_{i=1}^m\sim\mathcal{N}(\bm{0},\bm{I}_n)$ sensing vectors, the following result establishes the statistical convergence rate for the proposed SPARTA algorithm in the case of $\gamma=+\infty$.

\begin{theorem}[\bf Exact recovery]
	\label{thm:exact}
	Fix $\bm{x}\in\mathbb{R}^n$ to be any $k$-sparse ($k\ll n$) vector of the minimum nonzero entry on the order of $(1/\sqrt{k})\|\bm{x}\|_2$, namely, $x_{\min}^2=(C_1/k)\|\bm{x}\|_2^2$ for some number $C_1>0$. Consider the $m$ noiseless measurements $\psi_i=|\bm{a}_i^\ccalT\bm{x}|$ from i.i.d. $\bm{a}_i\sim \mathcal{N}(\bm{0},\bm{I}_n)$, $1\le i\le m$. If $m\ge C_0k^2\log (mn)$, Step 3 of SPARTA (tabulated in Algorithm \ref{alg:SPARTA}) recovers the support of $\bm{x}$ exactly with probability at least $1-6/m$. 	 Furthermore, there exist numerical constants $\underline{\mu},\;\widebar{\mu}>0$ such that with a fixed step size $\mu\in [\underline{\mu},\,\widebar{\mu}]$, and a truncation threshold $\gamma=+ \infty$, successive estimates of SPARTA obey 
	\begin{equation} 
\label{eq:thm}
		{\rm dist}( \bm{z}^t,\bm{x})\le \frac{1}{10}\left(1-\nu\right)^t\left\|\bm{x}\right\|_2,\quad  t=0,\,1,\,\ldots
	\end{equation} 
	which holds with probability exceeding 
 $1-c_1m{\rm e}^{-c_0k}-7/m$ provided that $m\ge C_2|\widebar{\mathcal{I}}^0|\ge C_0 k^2\log (mn)$.
	Here, $c_0,\,c_1,\,C_0,\,C_2$, and $0<\nu<1$ are some numerical constants. 
\end{theorem}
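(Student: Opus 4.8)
The statement has two parts --- exact support recovery and linear convergence of the thresholded truncated-gradient iterates --- and I would establish them in that order. For the support-recovery part, the plan is to turn the population separation in \eqref{eq:exp} into a finite-sample one. Writing $\hat Z_j:=\tfrac1m\sum_{i=1}^m\psi_i^2 a_{i,j}^2$, identity \eqref{eq:exp} gives $\mathbb E[\hat Z_j]=\|\bm x\|_2^2+2x_j^2$, so the gap between any on-support and any off-support coordinate is at least $2x_{\min}^2=(2C_1/k)\|\bm x\|_2^2$. Each summand $\psi_i^2 a_{i,j}^2=(\bm a_i^\ccalT\bm x)^2 a_{i,j}^2$ is a product of two (sub-Gaussian)$^2$ variables, hence sub-Weibull; conditioning on $\bm a_{i,\mathcal S}$ (which fixes $\psi_i$) and applying a Bernstein-type bound to the resulting weighted sums of $\chi^2$ variables yields, up to lower-order terms, $\Pr\{|\hat Z_j-\mathbb E[\hat Z_j]|>t\}\le 2\exp\big(-cm\min(t^2/\|\bm x\|_2^4,\,t/(\|\bm x\|_2^2\log m))\big)$. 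Taking $t=x_{\min}^2=(C_1/k)\|\bm x\|_2^2$ (half the gap) and a union bound over the $n$ coordinates forces every $\hat Z_j$ within $x_{\min}^2$ of its mean once $m\ge C_0 k^2\log(mn)$, so the top-$k$ entries of $\{\hat Z_j\}_{j=1}^n$ are exactly $\mathcal S$, i.e.\ rule \eqref{eq:suppest} succeeds; calibrating $C_0$ makes the failure probability at most $6/m$.

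For the convergence part I would first quantify the initialization. On the event $\{\hat{\mathcal S}=\mathcal S\}$, Steps~3--4 of Algorithm~\ref{alg:SPARTA} are precisely the orthogonality-promoting initialization of \cite{taf} run on the reduced data $\{(\bm a_{i,\mathcal S},\psi_i)\}_{i=1}^m$, whose sensing vectors $\bm a_{i,\mathcal S}$ are i.i.d.\ $\mathcal N(\bm 0,\bm I_k)$; since $|\widebar{\mathcal I}^0|\gtrsim k^2\log(mn)\gg k$, the guarantee of \cite{taf} makes ${\rm dist}(\tilde{\bm z}^0,\bm x/\|\bm x\|_2)$ smaller than any prescribed constant, and combining with the concentration $\sqrt{\sum_i\psi_i^2/m}\to\|\bm x\|_2$ (which fails with probability at most $1/m$) yields ${\rm dist}(\bm z^0,\bm x)\le\tfrac1{10}\|\bm x\|_2$; a mismatch $\hat{\mathcal S}\ne\mathcal S$ is absorbed by Lemma~\ref{le:thresh}. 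Second, I would prove a \emph{sparse regularity condition} for the (here untruncated, since $\gamma=+\infty$) generalized gradient $\nabla\ell_{\rm tr}(\bm z)=\tfrac1m\sum_{i=1}^m\big(\bm a_i^\ccalT\bm z-\psi_i\bm a_i^\ccalT\bm z/|\bm a_i^\ccalT\bm z|\big)\bm a_i$: there exist $\delta>0$ and constants $0<a_1\le a_2$ so that, with probability at least $1-c_1 m{\rm e}^{-c_0 k}$, every $3k$-sparse $\bm z$ with $\phi(\bm z)=0$ and ${\rm dist}(\bm z,\bm x)\le\delta\|\bm x\|_2$ obeys $\langle\nabla\ell_{\rm tr}(\bm z),\bm z-\bm x\rangle\ge a_1\|\bm z-\bm x\|_2^2$ and $\|\nabla\ell_{\rm tr}(\bm z)\|_2\le a_2\|\bm z-\bm x\|_2$. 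This follows by fixing a support $T$ with $|T|\le 3k$, invoking the regularity-condition analysis of \cite{taf} on $\mathbb R^{T}\cong\mathbb R^{3k}$ (legitimate because $m\gtrsim 3k$, with per-support failure probability ${\rm e}^{-\Theta(m)}$), and union-bounding over the $\binom{n}{3k}$ supports; since $m\gtrsim k^2\log(mn)$ dwarfs $k\log(n/k)$, the union bound closes with room to spare.

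With these two facts, \eqref{eq:thm} follows by induction on $t$. Assume ${\rm dist}(\bm z^t,\bm x)\le\tfrac1{10}\|\bm x\|_2$ and, after a global sign flip, $\phi(\bm z^t)=0$. Put $\bm u:=\bm z^t-\mu\nabla\ell_{\rm tr}(\bm z^t)$, $\bm z^{t+1}=\mathcal H_k(\bm u)$, and $\Omega:={\rm supp}(\bm z^t)\cup{\rm supp}(\bm z^{t+1})\cup{\rm supp}(\bm x)$ with $|\Omega|\le 3k$. Since $\bm z^{t+1}$ is the best $k$-term approximation of $\bm u$ --- equivalently of $\bm u_\Omega$, because ${\rm supp}(\bm z^{t+1})\subseteq\Omega$ --- and $\bm x$ is $k$-sparse and supported in $\Omega$, one gets $\|\bm z^{t+1}-\bm u_\Omega\|_2\le\|\bm x-\bm u_\Omega\|_2$, hence ${\rm dist}(\bm z^{t+1},\bm x)\le 2\|(\bm u-\bm x)_\Omega\|_2$. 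Using $(\bm z^t-\bm x)_\Omega=\bm z^t-\bm x$ and $\langle\bm z^t-\bm x,(\nabla\ell_{\rm tr}(\bm z^t))_\Omega\rangle=\langle\bm z^t-\bm x,\nabla\ell_{\rm tr}(\bm z^t)\rangle$, expanding $\|(\bm u-\bm x)_\Omega\|_2^2$ and invoking the sparse regularity condition gives ${\rm dist}^2(\bm z^{t+1},\bm x)\le 4\big(1-2\mu a_1+\mu^2 a_2^2\big)\,{\rm dist}^2(\bm z^t,\bm x)$. One then fixes numerical constants $\underline\mu,\widebar\mu$ so that for every $\mu\in[\underline\mu,\widebar\mu]$ the factor $(1-\nu)^2:=4(1-2\mu a_1+\mu^2 a_2^2)$ is strictly below $1$; this simultaneously propagates ${\rm dist}(\bm z^{t+1},\bm x)\le\tfrac1{10}\|\bm x\|_2$ (so the induction continues) and yields \eqref{eq:thm}. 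The probability in the statement is then a union bound over the support-recovery failure ($\le 6/m$), the norm-estimate failure ($\le 1/m$), and the joint initialization/regularity-condition failure ($\le c_1 m{\rm e}^{-c_0 k}$).

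The step I expect to be hardest is making that factor $4$ --- which comes from the hard-thresholding operator $\mathcal H_k$ and can be shaved toward the golden ratio $\tfrac{1+\sqrt5}{2}$ with a sharper $\mathcal H_k$ estimate, but not below --- be beaten by the contraction $1-2\mu a_1+\mu^2 a_2^2$ of the plain truncated-gradient step. This forces the regularity condition of \cite{taf} to hold with a quantitatively favorable ratio of its two constants (crudely, $a_2^2/a_1^2<4/3$ in the version above) on a small neighborhood of $\bm x$, and for the \emph{untruncated} ($\gamma=+\infty$) amplitude-flow gradient restricted to $3k$-sparse directions; verifying this, together with ensuring the per-support failure probability in the sparse regularity condition is genuinely ${\rm e}^{-\Theta(m)}$ so the $\binom{n}{3k}$ union bound survives, is where most of the work lies.
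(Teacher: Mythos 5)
Your support-recovery and initialization arguments follow the paper's route essentially verbatim: the paper likewise concentrates $\hat Z_j=\frac1m\sum_i\psi_i^2a_{i,j}^2$ around means separated by $2x_{\min}^2$ (using a one-sided bound for variables bounded above when $j\in\mathcal S$ --- needed because $\psi_i^2a_{i,j}^2$ is then a fourth-degree Gaussian polynomial whose upper tail is heavier than sub-exponential, a point your ``sub-Weibull'' remark glosses over but which is fixable since only a lower tail is required there --- and the Laurent--Massart $\chi^2$ inequality conditioned on $\{\psi_i\}$ for $j\notin\mathcal S$), and it invokes \cite[Prop.~1]{taf} on the dimension-reduced data for the initialization. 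Your factor-$2$ thresholding inequality ${\rm dist}(\bm z^{t+1},\bm x)\le 2\|(\bm u-\bm x)_\Omega\|_2$ is also exactly the paper's starting point for the refinement stage, cf.~\eqref{eq:2times}.

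The gap is in how you bound $\|(\bm u-\bm x)_\Omega\|_2$. Routing this through a regularity condition $\langle\nabla\ell_{\rm tr}(\bm z),\bm h\rangle\ge a_1\|\bm h\|_2^2$, $\|\nabla\ell_{\rm tr}(\bm z)\|_2\le a_2\|\bm h\|_2$ and the quadratic $1-2\mu a_1+\mu^2a_2^2$ cannot beat the factor $4$ with the constants actually available. Restricted to a $3k$-sparse support, the amplitude-flow gradient decomposes as $\nabla\ell_{\rm tr}(\bm z)_\Omega=\big(\frac1m\sum_i\bm a_{i,\Omega}\bm a_{i,\Omega}^\ccalT\big)\bm h_\Omega-\bm r_\Omega$ with $\|\bm r_\Omega\|_2\le\zeta\|\bm h\|_2$, and the sign-mismatch constant $\zeta$ furnished by \cite[Lemma 7.17]{pwf} at the initialization accuracy $\rho_0=1/10$ is on the order of $0.2$, not the $0.07$ your condition $a_2^2/a_1^2<4/3$ would require. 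One then only gets $a_1\approx 1-\zeta\approx 0.8$ and $a_2\approx 1+\zeta\approx 1.2$, so $a_2^2/a_1^2\approx 2.3$ and $\min_\mu 4\big(1-2\mu a_1+\mu^2a_2^2\big)=4\big(1-a_1^2/a_2^2\big)>1$: no contraction, even with the sharper golden-ratio estimate for $\mathcal H_k$. The failure is structural: Cauchy--Schwarz-ing the whole gradient discards the near-cancellation between $\bm h$ and $\mu\nabla\ell_{\rm tr}(\bm z)$ at $\mu\approx 1$. The paper instead keeps the matrix intact and bounds $\big\|\bm h_\Omega-\frac\mu m\sum_i\bm a_{i,\Omega}\bm a_{i,\Omega}^\ccalT\bm h_\Omega\big\|_2\le\max\{|1-\mu(1-\delta_{2k})|,\;|\mu(1+\delta_{2k})-1|\}\,\|\bm h\|_2$ via the restricted isometry of $\frac1{\sqrt m}\bm A$, adds a cross-term over $\Theta^t\setminus\Theta^{t+1}$ of size $\mu\delta_{3k}\|\bm h\|_2$ and the residual $\mu(1+\delta_{2k})\zeta\|\bm h\|_2$, which at $\mu=1$ totals roughly $(\delta_{3k}+\zeta)\|\bm h\|_2$; the factor of $2$ from thresholding is then harmless. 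You correctly flagged this step as the crux, but the fix is not a better ratio $a_2/a_1$ or a sharper $\mathcal H_k$ bound --- it is to abandon the generic regularity-condition quadratic in favor of this finer decomposition.
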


Proof of Theorem~\ref{thm:exact} is deferred to Section~\ref{sec:proof} with supporting lemmas presented in the Appendix. 
We typically take parameters $|\widebar{\mathcal{I}}^0|=\lceil\frac{1}{6}m\rceil$, and $\mu=1$, which will also be validated by our analytical results on the feasible region of the step size. 
The constant $C_0$ depends on $C_1$, $\nu$ on $\mu$ and $C_1$, and $\underline{\mu}$ and $\widebar{\mu}$ rely on both $C_1$ and $C_0$.
In the case of PR of unstructured signals, existing algorithms such as TAF ensures exact recovery when the number of measurements $m$ is about the number of unknowns $n$, i.e., $m\gtrsim  n$. Hence, it would be more meaningful to study the sample complexity bound for PR of sparse signals when $m  \lesssim n$. To this end, the sample complexity bound $m\ge C_0k^2\log(mn)$ in Theorem~\ref{thm:exact} can often be rewritten as $m\ge C_0'k^2\log n$ for some constant $C_0'>C_0$ and large enough $n$. 
Regarding Theorem~\ref{thm:exact}, three observations are in order.

 \begin{remark}
 SPARTA recovers exactly any $k$-sparse signal $\bm{x}$ of minimum nonzero entries on the order of $(1/\sqrt{k})\|\bm{x}\|_2$ when there are about $  k^2\log n$ magnitude-only measurements, which coincides with the number of measurements required by the state-of-the-art algorithms such as CPRL~\cite{cprl}, sparse AltMinPhase~\cite{tsp2015njs}, and TWF~\cite{as2016clm}. 

 \end{remark}

\begin{remark}
SPARTA converges at a linear rate to the globally optimal solution $\bm{x}$ with convergence rate independent of the signal dimension $n$. In other words, for any given solution accuracy $\epsilon>0$, after running at most $T=\log(1/\epsilon)$  SPARTA iterations~\eqref{eq:iteration}, the returned estimate $\bm{z}^T$ is at most $\epsilon\|\bm{x}\|_2$ away from the global solution $\bm{x}$.	
\end{remark}

\begin{remark}
SPARTA enjoys a low computational complexity of $\mathcal{O}(k^2n\log n)$, and incurs a total runtime of $\mathcal{O}(k^2n\log n\log(1/\epsilon))$ to produce an $\epsilon$-accurate solution. The runtime is proportional to the time $\mathcal{O}(k^2n\log n)$ taken to read the data $\{(\bm{a}_i,\psi_i)\}_{i=1}^m$. 
To see this, recall that the support recovery incurs computational complexity $\mathcal{O}(k^2n\log n+n\log n)$, power iterations incur complexity $\mathcal{O}(k^2n\log n )$, and thresholded truncated gradient iterations have complexity $\mathcal{O}(k^2n\log n)$; hence, leading to a total complexity on the order of $k^2n\log n$. Given the linear convergence rate, SPARTA takes a total runtime of $\mathcal{O}(k^2n\log n\log(1/\epsilon))$ to achieve any fixed solution accuracy $\epsilon>0$.
\end{remark}

Besides exact recovery guarantees in the case of noiseless measurements, it is worth mentioning that SPARTA exhibits robustness to additive noise, especially when the noise has bounded values. Numerical results using SPARTA for noisy sparse PR will be presented in the ensuing section.

\section{Numerical Experiments}
\label{sec:test}
Simulated tests evaluating performance of SPARTA relative to truncated amplitude flow (TAF)~\cite{taf} (which does not exploit the sparsity) and thresholded Wirtinger flow (TWF) 
 \cite{as2016clm} 
 are presented in this section. 
For fair comparisons, the algorithmic parameters involved in all schemes were set to their suggested values. 
The initialization in each scheme was obtained based upon $100$ power iterations, and was subsequently refined by $T=1,000$ gradient iterations. 
In all reported experiments, the true $k$-sparse signal vector $\bm{x}\in\mathbb{R}^n$ or $\mathbb{C}^n$ was generated first using $\bm{x}\sim\mathcal{N}(\bm{0},\bm{I}_n)$ or $\mathcal{CN}(\bm{0},\bm{I}_n)$, followed by setting $(n-k)$ of its $n$ entries to zero uniformly at random. 
For reproducibility, the Matlab implementation of SPARTA is publicly available at \url{https://gangwg.github.io/SPARTA/}.

The first experiment evaluates the exact recovery performance of various approaches in terms of the empirical success rate over $100$ independent Monte Carlo trials, where the true signals are real-valued. A success is declared for a trial provided that the returned estimate incurs a relative mean-square error defined as 
$${\rm Relative~MSE}:=\frac{{\rm dist}(\bm{z}^T,\bm{x})}{\|\bm{x}\|_2}$$
 less than $10^{-5}$. We fixed the signal dimension to $n=1,000$, and the sparsity level at $k=10$, while the number of measurements $m/n$ increases from $0.1$ to $3$ by $0.1$. Curves in Fig.~\ref{fig:rate} clearly demonstrate markedly improved performance of SPARTA over state-of-the-art alternatives. Even when the exact number of nonzero elements in $\bm{x}$, namely, $k$ is unknown, setting $k$ in Algorithm~\ref{alg:SPARTA} as an upper limit on the theoretically affordable sparsity level (e.g., $\lceil\sqrt{n}\;\rceil$ when $m$ is about $n$ according to Theorem~\ref{thm:exact}) works well too (see the magenta curve, denoted SPARTA0). Comparison between TAF and SPARTA shows the advantage of exploiting sparsity in sparse PR settings. 
\begin{figure}[h!]
	\centering
	\includegraphics[scale=0.6]{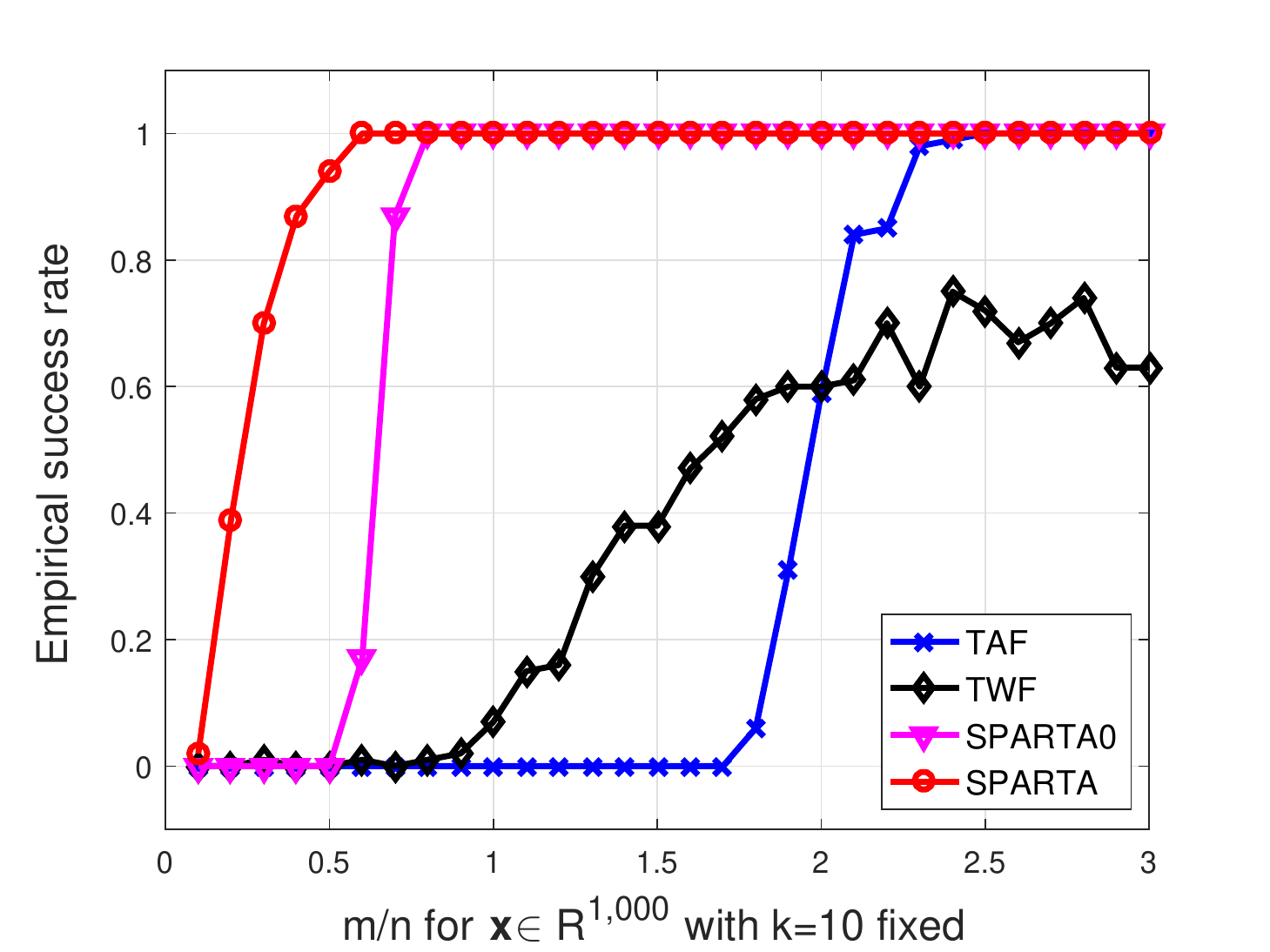}
\caption{
Empirical success rate versus $m/n$ for $\bm{x}\in\mathbb{R}^{n}$ with $n=1,000$
and $k=10$ nonzero entries
using: i) TAF without exploiting sparsity~\cite{taf}; ii) TWF~\cite{as2016clm}; iii) SPARTA0 with the exact number of nonzeros unknown, and $k$ taken as an upper limit $\lceil\sqrt{n}\rceil=32$; and iv) SPARTA with $k=10$. 
}
\label{fig:rate}
\end{figure}

The second experiment examines how SPARTA recovers real-valued signals of various sparsity levels given a fixed number of measurements.   
Figure~\ref{fig:sparsity} depicts the empirical success rate versus the sparsity level $k$, where $k$ equals the exact number of nonzero entries in $\bm{x}$.
The results suggest that with a total of $m=n$ phaseless quadratic equations, 
TAF representing the state-of-the-art for PR of unstructured signals fails,
 as shown by the blue curve. Although TWF works in some cases, SPARTA significantly outperforms TWF, and it ensures exact recovery of sparse signals with up to about $25< \sqrt{n}\approx 32$ nonzero entries (due to existence of polylog factors in the sample complexity), hence justifying our analytical results. 

\vspace{-.em}
\begin{figure}[h!]
	\centering
	\includegraphics[scale=0.6]{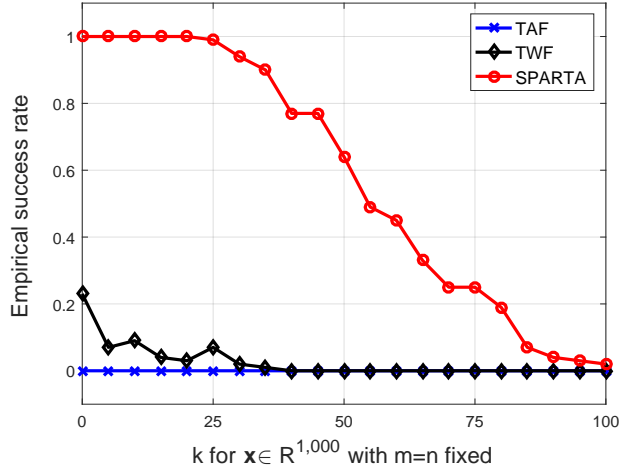}
\caption{
Empirical success rate versus sparsity level $k$ for $\bm{x}\in\mathbb{R}^{n}$ with $m=n=1,000$ fixed
using: i) TAF; ii) TWF, and iii) SPARTA. 
}
\label{fig:sparsity}
\vspace{-.em}
\end{figure}

 \begin{figure}[h!]
 	\centering
 	\includegraphics[scale=0.6]{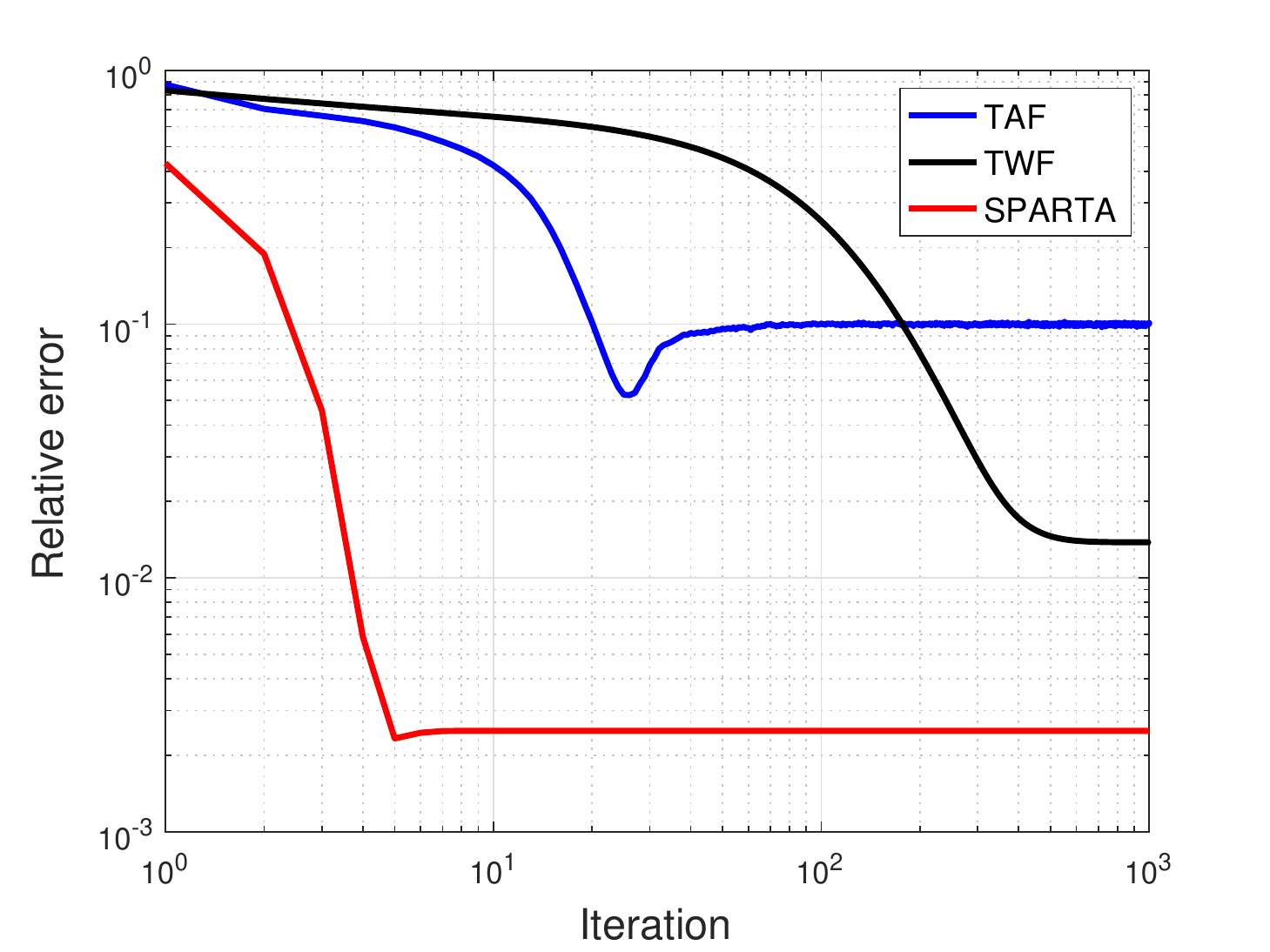}
 	\caption{
 		Convergence behavior in the case of noisy data with $n=1,000$, $m=3,000$, and $k=10$ 
 		using: i) TAF; ii) TWF; and iii) SPARTA. 
 	}
 	\label{fig:noisy}
 \end{figure}
 
The next experiment validates the robustness of SPARTA against additive noise present in the data. Postulating the noisy Gaussian data model $\psi_i=|\bm{a}_i^\ccalT\bm{x}|+\eta_i$~\cite{tsp2015njs}, 
we generated i.i.d. Gaussian noise according to $\eta_i\sim\mathcal{N}(0,\,0.1^2)$, $i=1,\,\ldots,\,m$. From Fig.~\ref{fig:rate}, it is clear that to achieve exact recovery, SPARTA requires about $m=6k^2=600$ measurements, TAF about $3n=3,000$ measurements, and TWF much more than $3,000$. In this case, parameters were taken as $n=1,000$, $m=3,000$, and $k=10$, with the number of measurements large enough to guarantee that TWF and TAF also work. It is worth mentioning that SPARTA can work with a far smaller number of measurements than $m=3,000$. 
As seen from the plots, SPARTA performs only a few gradient iterations to achieve the most accurate solution among the three approaches, while its competing TAF and TWF require nearly an order more number of iterations to converge to less accurate estimates.

To demonstrate the stability of SPARTA in the presence of additive noise, the relative MSE 
is plotted as a function of the signal-to-noise (SNR) values in dB. Our experiments are based on the additive Gaussian noise model $\psi_i=|\bm{a}_i^\ccalT\bm{x}|+\eta_i$ with a $10$-sparse signal $\bm{x}\in\mathbb{R}^{1,000}$ and the noise $\bm{\eta}:=[\eta_1~\cdots~\eta_m]^\ccalT\sim \mathcal{N}(\bm{0},\sigma^2\bm{I}_{m})$, where the variance $\sigma^2$ is chosen such that certain ${\rm SNR}:=10\log_{10} {\sum_{i=1}^m\nicefrac{|\langle\bm{a}_i,\bm{x}\rangle|^2}{ \sigma^2}} 
  $ values are achieved. 
  The ratio $m/n$ takes values $\{1,\,2,\,3\}$, and the SNR in dB
is varied from $5$ dB to $55$ dB. Averaging over $100$ Monte Carlo realizations, Fig.~\ref{fig:db} demonstrates that the relative MSE for all $m/n$ values scales inversely proportional to SNR, hence corroborating the stability of SPARTA in the presence of additive noise.

\begin{figure}[ht]
	\centering
	\includegraphics[scale=0.6]{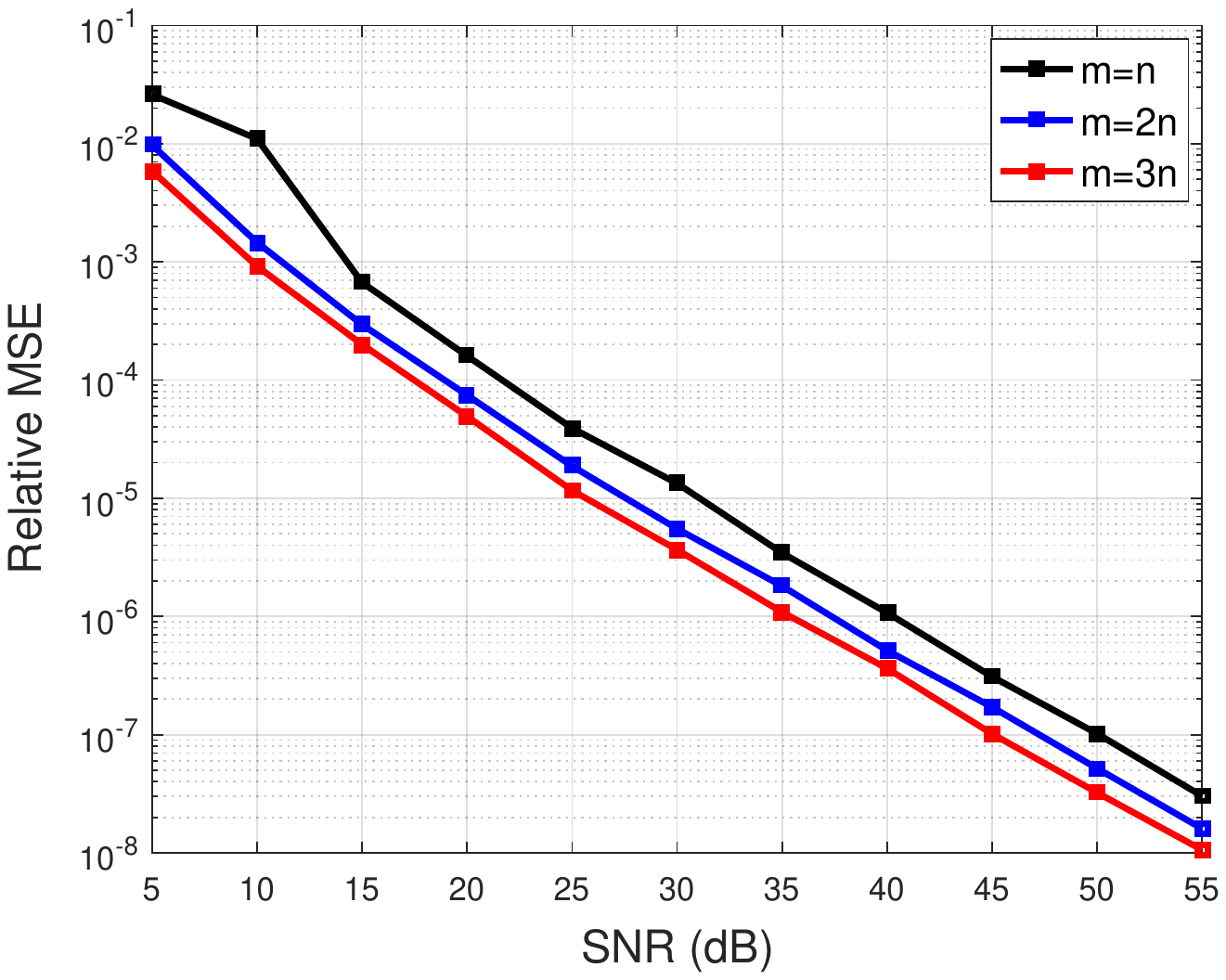}
	\caption{Relative MSE versus SNR for SPARTA with the AWGN model.}
	\label{fig:db}
\end{figure}

The last experiment tested the efficacy of SPARTA in the complex-valued setting, where the underlying $10$-sparse signal $\bm{x}\in\mathbb{C}^{20,000}$ was generated using $\bm{x}\sim\mathcal{CN}(\bm{0},\bm{I}_{20,000}):=\mathcal{N}(\bm{0},\bm{I}_{20,000}/2)+ j \mathcal{N}(\bm{0},\bm{I}_{20,000}/2)$, and the design vectors $\bm{a}_i\sim \mathcal{CN}(\bm{0},\bm{I}_{20,000}) $ for $1\le i \le 1,000$. The relative MSE versus iteration count was plotted in Fig. \ref{fig:rmse}, which validates the scalability and effectiveness of SPARTA in recovering complex signals. 
In terms of runtime, SPARTA recovers exactly a $20,000$-dimensional complex-valued signal from  $1,000$ magnitude-only measurements in a few seconds.

\begin{figure}[ht]
	\centering
	\includegraphics[scale=0.6]{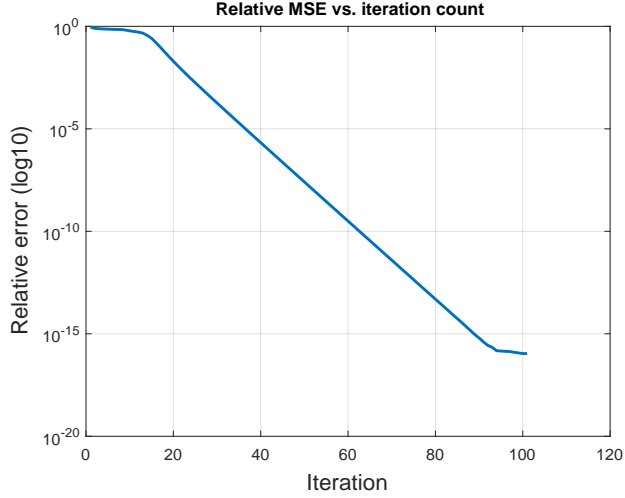}
	\caption{Relative MSE versus iteration count for SPARTA in the complex-valued setting.}
	\label{fig:rmse}
\end{figure}

Regarding computation times, SPARTA converges much faster (both in time and in the number of iterations required to achieve certain solution accuracy) than TWF and TAF in all reported experiments. All numerical experiments were implemented with MATLAB R$2016$a on an Intel CPU @ $3.4$ GHz ($32$ GB RAM) computer.

\section{Proof of Theorem~\ref{thm:exact}} 
\label{sec:proof}

The proof of Theorem~\ref{thm:exact} will be provided in this section. To that end, we will first evaluate the performance of our sparse orthogonality-promoting initialization. The following result demonstrates that if the number of measurements is sufficiently large (on the order of $k^2$ within polylog factors), 
Step \ref{step:3} of the SPARTA algorithm \ref{alg:SPARTA} reconstructs the support of $\bm{x}$ exactly with high probability.   
\begin{lemma}
\label{le: inverse} 
Consider any $k$-sparse signal $\bm{x}\in\mathbb{R}^n$ with support $\mathcal{S}$ and minimum nonzero entries $x_{\min}:=\min_{j\in \mathcal{S}}|x_j|$ on the order of $(1/\sqrt{k})\|\bm{x}\|_2$. If the sensing vectors $\{\bm{a}_i\}_{i=1}^m$ are i.i.d standard Gaussian, i.e., $\bm{a}_i\sim\mathcal{N}(\bm{0},\bm{I}_n)$, Step \ref{step:3} in Algorithm \ref{alg:SPARTA} recovers $\mathcal{S}$ exactly with probability at least $1-6/m$ provided $m\ge C_0k^2\log(mn)$ for some absolute constant $C_0>0$.
\end{lemma}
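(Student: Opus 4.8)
The statement is a concentration-of-measure result: the sample averages $\hat Z_{i,j} = \frac{1}{m}\sum_{i=1}^m \psi_i^2 a_{i,j}^2$ concentrate tightly enough around their means $\mathbb{E}[Z_{i,j}] = \|\bm{x}\|_2^2 + 2x_j^2$ that the top-$k$ indices of $\{\hat Z_{i,j}\}_{j=1}^n$ coincide with $\mathcal{S}$. From \eqref{eq:exp}, the means for $j\in\mathcal{S}$ exceed those for $j\notin\mathcal{S}$ by exactly $2x_j^2 \ge 2x_{\min}^2 = (2C_1/k)\|\bm{x}\|_2^2$. So it suffices to show that with probability at least $1-6/m$, every $\hat Z_{i,j}$ is within $x_{\min}^2$ of its mean, i.e. within $(C_1/k)\|\bm{x}\|_2^2$; then the ordering of the sample averages across the ``in-support'' and ``out-of-support'' groups is preserved and \eqref{eq:suppest} returns $\mathcal{S}$ exactly. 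Without loss of generality take $\|\bm{x}\|_2 = 1$, so the target deviation bound is $|\hat Z_{i,j} - \mathbb{E}[Z_{i,j}]| < C_1/k$ for all $j\in[n]$ simultaneously.

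\textbf{Key steps.} First I would fix $j$ and analyze the scalar sum $\hat Z_{i,j} = \frac{1}{m}\sum_{i=1}^m (\bm{a}_i^\ccalT\bm{x})^2 a_{i,j}^2$. Each summand is a product of (correlated) squared Gaussians, hence a sub-exponential random variable: write $\bm{a}_i^\ccalT\bm{x} = a_{i,j}x_j + \bm{a}_{i,/j}^\ccalT\bm{x}_{/j}$ and expand $(\bm{a}_i^\ccalT\bm{x})^2 a_{i,j}^2$ into terms $a_{i,j}^4 x_j^2$, $2a_{i,j}^3 x_j (\bm{a}_{i,/j}^\ccalT\bm{x}_{/j})$, and $a_{i,j}^2 (\bm{a}_{i,/j}^\ccalT\bm{x}_{/j})^2$; each is a polynomial of degree $\le 4$ in i.i.d. Gaussians, so it has sub-exponential (more precisely sub-Weibull of order $1/2$) tails with Orlicz norm bounded by an absolute constant (using $|x_j|\le\|\bm{x}\|_2=1$). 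I would then apply the Bernstein-type inequality for sums of independent sub-exponential variables (e.g. \cite[Prop.~5.16]{chap2010vershynin}) to get, for each fixed $j$,
\begin{equation*}
\pr\!\left(\big|\hat Z_{i,j} - \mathbb{E}[Z_{i,j}]\big| \ge \tau\right) \le 2\exp\!\big(-c\, m\min\{\tau^2, \tau\}\big)
\end{equation*}
for an absolute constant $c>0$. Second, set $\tau = C_1/k$ and take a union bound over $j = 1,\dots,n$: the failure probability is at most $2n\exp(-c\,m\,C_1^2/k^2)$ (the $\min$ resolves to $\tau^2$ since $C_1/k < 1$). Third, choose $C_0$ large enough (depending on $C_1$ and $c$) that $m \ge C_0 k^2\log(mn)$ forces $2n\exp(-cmC_1^2/k^2) \le 6/m$: indeed $cmC_1^2/k^2 \ge cC_0 C_1^2 \log(mn) \ge \log(2n m/6)$ once $cC_0C_1^2 \ge 1$ and $n\ge$ const, giving the claimed $1-6/m$. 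Finally, condition on this event and invoke the deterministic ordering argument above to conclude $\hat{\mathcal{S}} = \mathcal{S}$.

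\textbf{Main obstacle.} The routine but slightly delicate part is the sub-exponential norm bound for the cross term $2a_{i,j}^3 x_j(\bm{a}_{i,/j}^\ccalT\bm{x}_{/j})$: it is a degree-4 Gaussian chaos, so one should either cite a hypercontractivity/Hanson--Wright-type moment bound for polynomials in Gaussians, or decouple it by noting $\bm{a}_{i,/j}^\ccalT\bm{x}_{/j}\sim\mathcal{N}(0,\|\bm{x}_{/j}\|_2^2)$ is Gaussian and independent of $a_{i,j}$, then bounding $\mathbb{E}|a_{i,j}^3 g|^p = \mathbb{E}|a_{i,j}|^{3p}\,\mathbb{E}|g|^p \lesssim (3p)^{3p/2} p^{p/2}$, which certifies sub-Weibull$(1/2)$ tails and hence a valid Bernstein bound after the usual truncation. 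The remaining accounting — tracking how $C_0$ must depend on $C_1$ and reconciling $\log(mn)$ with the $2nm$ inside the exponent — is bookkeeping. Everything else (the separation $2x_{\min}^2$ and the ``top-$k$ picks the support'' logic) is immediate from \eqref{eq:exp}.
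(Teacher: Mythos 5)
Your high-level strategy (mean separation of $2x_{\min}^2$ plus uniform concentration of the $\hat Z_{i,j}$ and a union bound over $j$) is the same as the paper's, but the concentration step as you propose it has a genuine gap. The variable $Z_{i,j}=\psi_i^2a_{i,j}^2$ is \emph{not} sub-exponential: for $j\notin\mathcal{S}$ it is (in distribution) $\|\bm{x}\|_2^2$ times a product of two independent $\chi_1^2$ variables, whose tail decays like $\exp(-c\sqrt{t})$, so its moment generating function is infinite for every positive argument and the sub-exponential Bernstein inequality you cite simply does not apply. If you instead use the correct generalized Bernstein inequality for sub-Weibull$(1/2)$ variables, the exponent is $-c\min\{m\tau^2,\,(m\tau)^{1/2}\}$ rather than $-cm\min\{\tau^2,\tau\}$; with $\tau=C_1/k$ and $m\asymp C_0k^2\log(mn)$ the second term is $(C_0C_1k\log(mn))^{1/2}$, which for $k\lesssim \log(mn)$ is $o(\log(mn))$ and therefore cannot absorb the union bound over the $n$ coordinates with an absolute constant $C_0$. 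So your claim that ``the $\min$ resolves to $\tau^2$ since $C_1/k<1$'' is the step that fails: the comparison is between $m\tau^2$ and $(m\tau)^{1/2}$, and it goes the wrong way precisely in the small-$k$ regime. Truncation does not rescue this for free either: truncating at the level $B\gtrsim\log^2(mn)$ needed to control $\max_{i,j}Z_{i,j}$ degrades the Bernstein exponent to roughly $m\tau/B$, which again requires $k\gtrsim\log(mn)$ or extra polylogarithmic factors in $m$.

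The paper avoids this by splitting the two sides asymmetrically and never invoking a generic two-sided tail bound for $Z_{i,j}$. For $j\in\mathcal{S}$ only a \emph{lower} bound on $\hat Z_{i,j}$ is needed; since $Z_{i,j}\ge 0$, the variable $\tilde Z_j=\mathbb{E}[Z_{i,j}]-Z_{i,j}$ is bounded above by $3\|\bm{x}\|_2^4$-type quantities, and a one-sided inequality for upper-bounded variables (Lemma~\ref{le:bounded}) gives a clean sub-Gaussian tail with no heavy-tail issue. For $j\notin\mathcal{S}$, where the heavy upper tail genuinely matters, the paper exploits the independence of $a_{i,j}$ from $\psi_i$ off the support: conditionally on $\{\psi_i\}$, the sum $\sum_i\psi_i^2a_{i,j}^2$ is a weighted $\chi^2$ sum, and the Laurent--Massart bound (Lemma~\ref{le:2000}) combined with separate control of $\sum_i\psi_i^4$ and $\max_i\psi_i^2$ yields a dominant deviation of order $\sqrt{\log(mn)/m}\,\|\bm{x}\|_2^2$, with the heavy-tail contribution relegated to a lower-order $\log^2(mn)/m$ term. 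This conditioning/decoupling is the key idea your proposal is missing; without it (or some equivalent device) the union bound cannot be closed at the claimed sample complexity for all $k$.
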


\begin{proof}[Proof of Lemma~\ref{le: inverse}]
 As elaborated in Section~\ref{sub:init}, there is a clear separation in the expected values $\mathbb{E}[Z_{i,j}]=\mathbb{E}[\psi_i^2a_{i,j}^2]$ for $j\in \mathcal{S}$ and $j\notin \mathcal{S}$; that is,  
\begin{align}
\mathbb{E}[Z_{i,j}]&=\mathbb{E}\big[(\bm{a}_i^\ccalT\bm{x})^2a_{i,j}^2\big]=\mathbb{E}\big[a_{i,j}^4 x_j^2+
(\bm{a}_{i,/j}^\ccalT\bm{x}_{/j})^2a_{i,j}^2\big]\nonumber\\
&=\left\{\begin{array}{ll}
\|\bm{x}\|_2^2,&\; j\notin \mathcal{S},\\
\|\bm{x}\|_2^2+2x_j^2,&\; j\in \mathcal{S}.
\end{array}\right.
\end{align}
Consider the case of $j\in \mathcal{S}$ first. Based on $\mathbb{E}[a_{i,j}^{2p}]=(2p-1)!!$ with $p$ being a positive integer and the symbol $!!$ denoting the double factorial, $Z_{i,j}$ has second-order moment
\begin{align}
	\label{eq:ins}
	\mathbb{E}[Z_{i,j}^2]&=\mathbb{E}\big[(\bm{a}_i^\ccalT\bm{x})^4a_{i,j}^4\big]\nonumber\\
	&=\mathbb{E}\big[a_{i,j}^8x_j^4+a_{i,j}^4a_{i,\ell\ne j}^4\|\bm{x}_{/j}\|_2^4+6a_{i,j}^6x_j^2a_{i,\ell\ne j}^2\|\bm{x}_{/j}\|_2^2
	\big]\nonumber\\
&=105x_j^4+9\|\bm{x}_{/j}\|_2^4+90x_j^2\|\bm{x}_{/j}\|_2^2\nonumber\\
&= 9\|\bm{x}\|_2^4+24x_j^4+72x_j^2\|\bm{x}\|_2^2
\end{align}
where $\ell\in \{1,\,2,\,\ldots,\,n\}$ is some index from different than $j$. 
Letting $\tilde{Z}_{j}:=\|\bm{x}\|_2^2+2x_j^2-Z_{i,j}$ for all $ j\in \mathcal{S}$, it holds that $\tilde{Z}_j\le \|\bm{x}\|_2^2+2x_j^2\le 3\|\bm{x}\|_2^2$. Furthermore, one has $\mathbb{E}[\tilde{Z}_j]=0$, and 
\begin{align}
\mathbb{E}[\tilde{Z}_j^2]&=\!\|\bm{x}\|_2^4\!+\!4x_j^4\!+\!4x_j^2\|\bm{x}\|_2^2\!+\!\mathbb{E}[Z_{i,j}^2]\!-\!\big(2\|\bm{x}\|_2^2\!+\!4x_j^2\big)\mathbb{E}[Z_{i,j}]	\nonumber\\
&=8\|\bm{x}\|_2^4+68x_j^2\|\bm{x}\|_2^2+20x_j^4\nonumber\\
&\le 96\|\bm{x}\|_2^4.\nonumber
\end{align}

Appealing to Lemma~\ref{le:bounded}, 
one establishes for all $ j\in \mathcal{S}$ that 
\begin{equation}
	{\rm Pr}\Big(\frac{1}{m}	\sum_{i=1}^m\psi_i^2a_{i,j}^2-(\|\bm{x}\|_2^2+2x_j^2)\le -\epsilon\Big)\le \exp\Big(-\frac{m\epsilon^2}{192\|\bm{x}\|_2^4}\Big).\nonumber
\end{equation}
Taking $\epsilon=x_{\min}^2:=\min_{j\in \mathcal{S}} x_j^2\le x_j^2$ leads to
\begin{equation}
	{\rm Pr}\Big(\frac{1}{m}	\sum_{i=1}^m\psi_i^2a_{i,j}^2\le \|\bm{x}\|_2^2+x_{\min}^2\Big)\le \exp\Big(-\frac{mx_{\min}^4}{192\|\bm{x}\|_2^4}\Big).\nonumber
\end{equation}
Recalling our assumption that $x_{\min}^2$ is on the order of $(1/k)\|\bm{x}\|_2^2$, i.e., $x_{\min}^2=(C_1/k)\|\bm{x}\|_2^2$ for certain constant $C_1>0$, 
the following holds with probability at least $1-1/m$ for all $j\in\mathcal{S}$
\begin{align}
\label{eq:1stfinal}
\min_{j\in \mathcal{S}}\frac{1}{m}	\sum_{i=1}^m\psi_i^2a_{i,j}^2\ge \|\bm{x}\|_2^2+x_{\min}^2=\Big(1+\frac{C_1}{k}\Big)\|\bm{x}\|_2^2
\end{align}
provided that $m\ge C_0k^2\log(mn)$ for some absolute constant $C_0>0$.

Now let us turn to the case of $j\notin \mathcal{S}$, in which $\sum_{i=1}^mZ_{i,j}=\sum_{i=1}^m\psi_i^2a_{i,j}^2$ is a weighted sum of $\chi_1^2$ random variables. According to Lemma \ref{le:2000}, it holds that
\begin{align}
	\label{eq:20001}
	{\rm Pr}\Big(\sum_{i=1}^m\psi_i^2(a_{i,j}^2\!-\!1)\!>\!2\sqrt{\epsilon}\big(\sum_{i=1}^m\psi_i^4\big)^{\frac{1}{2}}\!+\!2\epsilon\max_i \psi_i^2\Big)\!\le \!\exp(-\epsilon).
\end{align}
In addition, for any constants $\epsilon',\,\epsilon''>0$, Chebyshev's inequality together with the union bound confirms that
\begin{subequations}
\label{eq:coefficient}
\begin{align}
&	{\rm Pr}\Big(\sum_{i=1}^m\psi_i^4>\big(3m+\sqrt{96m}\epsilon'\big)\|\bm{x}\|_2^4\Big)\le 1/(\epsilon')^2\label{eq:2norm}\\
&	{\rm Pr}\Big(\max_{1\le i\le m}\psi_i^2>\epsilon''\|\bm{x}\|_2^2\Big)\le 2m\exp(-\epsilon''/2).\label{eq:infinity}
\end{align}	
\end{subequations}
Take
 $\epsilon:=\log(mn)$ in~\eqref{eq:20001}, $\epsilon':=\sqrt{m}$ and $\epsilon'':=4\log(mn)$ in~\eqref{eq:coefficient}. Then, with probability at least $1-4/m$, the next holds for all $j\notin \mathcal{S}$ and $m>C'$
\begin{align}\label{eq:1stpart}
\frac{1}{m}\sum_{i=1}^m\psi_i^2(a_{i,j}^2-1)&\le \frac{2}{m}\sqrt{\log(mn)}\sqrt{3m+\sqrt{96m}\sqrt{m}}\|\bm{x}\|_2^2+\frac{8}{m}\big(\log(mn)\big)^2\|\bm{x}\|_2^2\nonumber\\
&\le 8\sqrt{\frac{\log(mn)}{m}}\|\bm{x}\|_2^2
\end{align}
for some absolute constant $C'>0$ depending on $n$. 

On the other hand, the rotational invariance property of Gaussian distributions asserts that $\psi_i^2=|\bm{a}_i^\ccalT\bm{x}|^2=|\bm{a}_{i,\mathcal{S}}^\ccalT\bm{x}_{\mathcal{S}}|^2\overset{d}{=}a_{i,j}^2\|\bm{x}\|_2^2$~\cite{phaselift}, in which the symbol $\overset{d}{=}$ means that terms involved on both sides of the equality enjoy the same distribution. Since the $\chi^2$ variables $a_{i,j}^2$ are sub-exponential, 
an application of Bernstein's inequality produces the tail bound
\begin{equation}
{\rm Pr}\Big(\frac{1}{m}\sum_{i=1}^ma_{i,j}^2-1\ge \epsilon\Big)\le \exp(-m\epsilon^2/8)
\end{equation}   
for any $\epsilon\in(0,1)$, which can also be easily verified with a direct tail probability calculation from the tail probability of standard Gaussian distribution.
Choosing $\epsilon:=\sqrt{16 \log (m)/m}$ with $m>C'$ gives rise to
\begin{equation}\label{eq:2ndpart}
\frac{1}{m}\sum_{i=1}^m\psi_{i,j}^2\le\bigg(1+4\sqrt{\frac{\log m}{m}}\bigg)
 \|\bm{x}\|_2^2
\end{equation}   
which holds true with probability at least $1-{1}/{m}$ for all $ j\in[ m]$. 
Putting results in \eqref{eq:1stpart} and \eqref{eq:2ndpart} together leads to 
\begin{align}\label{eq:2ndfinal}
\max_{j\notin \mathcal{S}\subseteq [m]}\frac{1}{m}\sum_{i=1}^m\psi_i^2a_{i,j}^2\le \bigg(1+12\sqrt{\frac{\log (mn)}{m}}\bigg)
 \|\bm{x}\|_2^2
\end{align}
which holds with probability exceeding $1-5/m$ for large enough $m$. 

The last inequality taken collectively with \eqref{eq:1stfinal} suggests that there exists an event $E_0$ on which with probability at least $1-6/m$, the following holds
\begin{align}
\label{eq:final}
\min_{j\in \mathcal{S}}\frac{1}{m}\sum_{i=1}^m\psi_i^2 a_{i,j}^2&\ge \Big(1+\frac{C_1}{k}\Big)\|\bm{x}\|_2^2\nonumber\\
&>
\bigg(1+12\sqrt{\frac{\log(mn)}{m}}\bigg)\|\bm{x}\|_2^2\nonumber\\
&\ge \max_{j\notin \mathcal{S}}\frac{1}{m}\sum_{i=1}^m\psi_i^2 a_{i,j}^2
\end{align}
provided that $m\ge C_0k^2\log(mn)$ such that $C_0\ge 144/C_1^2$ with $x_{\min}^2=(C_1/k)\|\bm{x}\|_2^2$. 
\end{proof}

Upon obtaining the support of the underlying sparse signal, SPARTA subsequently employs the orthogonality-promoting initialization on the reduced-dimension data $\{(\psi_i,\bm{a}_{i,\hat{S}})\}$. Based on results in~\cite[Proposition 1]{taf}, the estimate $\bm{z}^0_{\hat{S}}:=\sqrt{\sum_{i=1}^m\psi_i^2/m}\tilde{\bm{z}}^0_{\hat{S}}$ obtained from Step $3$ in Algorithm \ref{alg:SPARTA} satisfies ${\rm dist}(\bm{z}^0_{\hat{S}},\bm{x}_{\hat{S}})\le (1/10)\|\bm{x}_{\hat{S}}\|_2$ with high probability provided that $m/k$ is sufficiently large and $k$ large enough as well. 
Putting together this result, Lemma~\ref{le: inverse}, and Step $4$ in Algorithm \ref{alg:SPARTA} leads to the following lemma, which formally summarizes the theoretical performance of our proposed sparse orthogonality-promoting initialization.

\begin{lemma}
	\label{le:dist}
	Let ${\bm{z}}_0=\sqrt{\sum_{i=1}^m\psi_i^2/m}\,\tilde{\bm{z}}^0$ be given by Step $4$, and $\tilde{\bm{z}}^0$ obtained through the sparse orthogonality-promoting initialization Step $3$ in Algorithm~\ref{alg:SPARTA}. With probability at least $1-(m+6)\exp(-k/2)-7/m$, the following holds
\begin{equation}
{\rm dist}(\bm{z}_0,\bm{x})\le (1/10)\|\bm{x}\|_2\label{eq:distance}
\end{equation}
	provided that $m\ge C_0'k$ for some absolute constant $C_0'>0$. 
\end{lemma}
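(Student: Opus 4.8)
The plan is to derive Lemma~\ref{le:dist} by concatenating two facts that are already in hand: the exact support recovery of Lemma~\ref{le: inverse}, and the accuracy of the (plain) orthogonality-promoting initialization of \cite{taf} applied to the $k$-dimensional problem obtained after restricting to the recovered support. The glue between them is a short zero-padding identity for ${\rm dist}(\cdot,\cdot)$ together with a union bound over the two failure events; accordingly, the hypothesis $m\ge C_0'k$ in the statement should be read together with the hypothesis $m\ge C_0k^2\log(mn)$ inherited from Lemma~\ref{le: inverse}, which dominates it and is what makes the $6/m$ support-failure term legitimate.

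First I would invoke Lemma~\ref{le: inverse}: on an event $E_1$ with ${\rm Pr}(E_1)\ge 1-6/m$ one has $\hat{\mathcal{S}}=\mathcal{S}$. To handle the initialization I would introduce the \emph{oracle} estimates $\tilde{\bm{z}}^0_{\mathcal{S}}$ and its rescaled version $\bm{z}^{\,0}_{\mathcal{S}}:=\sqrt{\tfrac1m\sum_{i=1}^m\psi_i^2}\,\tilde{\bm{z}}^0_{\mathcal{S}}$ produced by running Steps~3--4 of Algorithm~\ref{alg:SPARTA} on the data $\{(\psi_i,\bm{a}_{i,\mathcal{S}})\}_{i=1}^m$, i.e.\ as if the true support $\mathcal{S}$ were known. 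Because $\psi_i=|\bm{a}_i^\ccalT\bm{x}|=|\bm{a}_{i,\mathcal{S}}^\ccalT\bm{x}_{\mathcal{S}}|$ and the sub-vectors $\bm{a}_{i,\mathcal{S}}\sim\mathcal{N}(\bm{0},\bm{I}_k)$ are still i.i.d.\ standard Gaussian, this oracle problem is literally a $k$-dimensional real Gaussian phase-retrieval instance. Hence \cite[Proposition~1]{taf}, applied with ambient dimension $k$ and target relative error $1/10$, gives an event $E_2$ with ${\rm Pr}(E_2)\ge 1-(m+6){\rm e}^{-k/2}-1/m$ — the $m{\rm e}^{-k/2}$-type terms coming from $\chi_k^2$ concentration of the $\{\|\bm{a}_{i,\mathcal{S}}\|_2^2\}$ across the $m$ summands and the trimming set $\widebar{\mathcal{I}}^0$, the $1/m$ from concentration of the norm estimate $\tfrac1m\sum_i\psi_i^2$ — on which ${\rm dist}(\bm{z}^{\,0}_{\mathcal{S}},\bm{x}_{\mathcal{S}})\le(1/10)\|\bm{x}_{\mathcal{S}}\|_2$, provided $m\ge C_0'k$ and $k$ is large enough. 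No separate argument is needed for the rescaling in Step~4, since the norm-rescaled estimate is exactly what \cite{taf} analyzes.

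To finish I would note that on $E_1$ the algorithm's Step~3/Step~4 outputs $\tilde{\bm{z}}^0,\bm{z}_0\in\mathbb{R}^n$ coincide with $\tilde{\bm{z}}^0_{\mathcal{S}},\bm{z}^{\,0}_{\mathcal{S}}$ zero-padded outside $\mathcal{S}$; since $\bm{x}$ is supported on $\mathcal{S}$ as well, both $\bm{z}_0+\bm{x}$ and $\bm{z}_0-\bm{x}$ are supported on $\mathcal{S}$, so ${\rm dist}(\bm{z}_0,\bm{x})={\rm dist}(\bm{z}^{\,0}_{\mathcal{S}},\bm{x}_{\mathcal{S}})$ and $\|\bm{x}_{\mathcal{S}}\|_2=\|\bm{x}\|_2$. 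Thus on $E_1\cap E_2$ one gets ${\rm dist}(\bm{z}_0,\bm{x})\le(1/10)\|\bm{x}\|_2$, and a union bound yields ${\rm Pr}(E_1\cap E_2)\ge 1-(m+6){\rm e}^{-k/2}-6/m-1/m=1-(m+6){\rm e}^{-k/2}-7/m$, which is the claim.

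The one subtlety — and the part I would be most careful about — is that $\{\hat{\mathcal{S}}=\mathcal{S}\}$ is a function of the very same sensing vectors $\{\bm{a}_i\}$ that drive the initialization, so one cannot naively ``condition on $\hat{\mathcal{S}}=\mathcal{S}$'' and then still treat the $\bm{a}_{i,\mathcal{S}}$ as fresh standard Gaussians. Routing the argument through the oracle quantities is precisely what removes this dependence: \cite[Proposition~1]{taf} is applied \emph{unconditionally} to the oracle problem, whose randomness is plain i.i.d.\ Gaussian, and the bound for the actual SPARTA output then follows on $E_1\cap E_2$ by a bare union bound, with no conditioning required. Beyond this point, the remaining work is only bookkeeping of the polylog sample-complexity factors (so that $m\ge C_0k^2\log(mn)$ from Lemma~\ref{le: inverse} absorbs the $m\ge C_0'k$ requirement here) and of the constants appearing in the failure probabilities.
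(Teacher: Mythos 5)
Your proposal is correct and follows essentially the same route as the paper, which simply combines Lemma~\ref{le: inverse} with \cite[Proposition~1]{taf} applied to the dimension-reduced data $\{(\psi_i,\bm{a}_{i,\hat{\mathcal{S}}})\}$ and omits the details. Your explicit oracle-estimate construction to decouple the support-recovery event from the Gaussianity of $\{\bm{a}_{i,\mathcal{S}}\}$ is a careful treatment of a dependence issue the paper glosses over, and your probability bookkeeping matches the stated bound $1-(m+6)\mathrm{e}^{-k/2}-7/m$.
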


The proof can be directly adapted from \cite[Proposition 1]{taf}, and hence it is omitted.

\begin{lemma}
\label{le:thresh}
 Take a constant learning parameter  $\mu\in(\underline{\mu},\,\widebar{\mu})$. 
  There exists an event of probability at least $1-c_1m^{-c_0k}$, such that on this event, starting from an initial estimate $\bm{z}^0$ satisfying ${\rm dist}(\bm{z}^0,\bm{x})\le (1/10)\|\bm{x}\|_2$, successive estimates by Step \ref{step:5} with $\gamma=+\infty$ in Algorithm~\ref{alg:SPARTA} obey 
 \begin{equation}
 {\rm dist}(\bm{z}^t,\bm{x})\le (1/10)(1-\nu)^t\|\bm{x}\|_2
,\quad t=0,\,1,\,\ldots
 \end{equation} 
if $m\ge C_0'' (3k)\log(n/(3k))$. Here, $\underline{\mu},\,\widebar{\mu}_0,\,c_0,\,c_1,\,C_0''>0$ are certain universal constants. 
\end{lemma}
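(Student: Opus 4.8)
The plan is to prove a one-step error contraction and then close an induction on $t$. First I would normalize $\|\bm x\|_2=1$ and show, by induction, that every iterate stays in the ball $\{\bm z:\|\bm z-\bm x\|_2\le(1/10)\|\bm x\|_2\}$; since $\|\bm x-(-\bm x)\|_2=2\|\bm x\|_2$, this forces $\phi(\bm z^t)=0$, so ${\rm dist}(\bm z^t,\bm x)=\|\bm z^t-\bm x\|_2=:\|\bm h^t\|_2$ throughout. With $\gamma=+\infty$ the truncation set is $\mathcal I^{t+1}=[m]$, hence the recursion in Step~\ref{step:5} becomes $\bm z^{t+1}=\mathcal H_k\!\big(\bm z^t-\mu\nabla\ell(\bm z^t)\big)$ with the full (generalized) gradient $\nabla\ell(\bm z^t)=\frac1m\sum_{i=1}^m\big(\bm a_i^\ccalT\bm z^t-\psi_i\,\bm a_i^\ccalT\bm z^t/|\bm a_i^\ccalT\bm z^t|\big)\bm a_i$. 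It then suffices to show that, on a single event of probability at least $1-c_1m^{-c_0k}$, one has $\|\bm h^{t+1}\|_2\le(1-\nu)\|\bm h^t\|_2$ whenever $\|\bm h^t\|_2\le(1/10)\|\bm x\|_2$.

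For the contraction, put $\bm u:=\bm z^t-\mu\nabla\ell(\bm z^t)$, $\mathcal S:={\rm supp}(\bm x)$, and $\Omega:={\rm supp}(\bm z^t)\cup{\rm supp}(\bm z^{t+1})\cup\mathcal S$, so $|\Omega|\le 3k$. Because $\bm z^{t+1}$ keeps the $k$ largest-magnitude entries of $\bm u$ and $\bm x$ is $k$-sparse, the standard hard-thresholding inequality (as in the IHT analyses of \cite{acha2009bd,acha2009nt} and their $M$-estimation refinements) gives $\|\bm z^{t+1}-\bm x\|_2^2\le 3\,\|(\bm u-\bm x)_\Omega\|_2^2$. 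Since $\bm h^t$ is supported inside $\Omega$,
\[
\|(\bm u-\bm x)_\Omega\|_2^2=\|\bm h^t\|_2^2-2\mu\big\langle\nabla\ell(\bm z^t),\bm h^t\big\rangle+\mu^2\big\|(\nabla\ell(\bm z^t))_\Omega\big\|_2^2 .
\]
Everything thus reduces to a \emph{restricted regularity condition}: there is a small absolute constant $\delta>0$ such that, uniformly over all $k$-sparse $\bm z$ with $\|\bm z-\bm x\|_2\le(1/10)\|\bm x\|_2$ (write $\bm h=\bm z-\bm x$) and all index sets $\Omega$ with $|\Omega|\le 3k$,
\[
\big\langle\nabla\ell(\bm z),\bm h\big\rangle\ge(1-\delta)\|\bm h\|_2^2,\qquad \big\|(\nabla\ell(\bm z))_\Omega\big\|_2\le(1+\delta)\|\bm h\|_2 .
\]
Granting this, for $\mu$ in a fixed interval $(\underline\mu,\widebar\mu)$ around $1$ we get $\|(\bm u-\bm x)_\Omega\|_2^2\le\big(1-2\mu(1-\delta)+\mu^2(1+\delta)^2\big)\|\bm h^t\|_2^2$, and by taking $\delta$ small enough the bracket lies in $(0,1/3)$; hence $\|\bm h^{t+1}\|_2^2\le 3\|(\bm u-\bm x)_\Omega\|_2^2<\|\bm h^t\|_2^2$, and we may set $1-\nu:=\sqrt{3\big(1-2\mu(1-\delta)+\mu^2(1+\delta)^2\big)}\in(0,1)$. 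This closes the induction and yields the stated geometric decay.

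It remains to prove the restricted regularity condition, which I view as the main obstacle. I would split $\nabla\ell(\bm z)$ into a ``correct-sign'' part $\frac1m\sum_i(\bm a_i^\ccalT\bm h)\bm a_i$ and a ``sign-flip'' remainder supported on the indices $i$ with ${\rm sign}(\bm a_i^\ccalT\bm z)\ne{\rm sign}(\bm a_i^\ccalT\bm x)$, which forces $|\bm a_i^\ccalT\bm x|\le|\bm a_i^\ccalT\bm h|$. For the correct-sign part, a restricted-isometry estimate $\frac1m\sum_i(\bm a_i^\ccalT\bm h)^2\in[1-\delta,1+\delta]\|\bm h\|_2^2$ for all $2k$-sparse $\bm h$ together with a restricted operator-norm bound $\big\|\big(\tfrac1m\sum_i\bm a_i\bm a_i^\ccalT-\bm I_n\big)_{\Omega,\Omega}\big\|\le\delta$ for all $|\Omega|\le 3k$ delivers both the inner-product lower bound and the $\Omega$-restricted norm upper bound. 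For the sign-flip remainder one must show its $\Omega$-restricted norm is $O(\|\bm h\|_2)\cdot\|\bm h\|_2$, hence negligible since $\|\bm h\|_2\le(1/10)\|\bm x\|_2$; this is the delicate step, because with $\gamma=+\infty$ there is no truncation discarding the offending summands, so one must bound $\frac1m\sum_{i:\,|\bm a_i^\ccalT\bm x|\le|\bm a_i^\ccalT\bm h|}(\bm a_i^\ccalT\bm x)\bm a_i$ (and the scalar $\frac1m\sum_{i:\,|\bm a_i^\ccalT\bm x|\le|\bm a_i^\ccalT\bm h|}(\bm a_i^\ccalT\bm x)^2$) uniformly in $\bm z$ over the whole ball. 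Each such quantity is a supremum of an empirical process over a $\le 3k$-dimensional family; I would control it by an $\epsilon$-net of the $\le 3k$-sparse unit sphere (covering number $\le(C/\epsilon)^{3k}\binom n{3k}$) combined with Bernstein/sub-exponential tail bounds for the pointwise quantities, followed by a union bound over the $\binom n{3k}$ supports. This covering/union-bound step is what produces the sample requirement $m\ge C_0''(3k)\log(n/(3k))$ and a failure probability of the claimed form $c_1m^{-c_0k}$ (using $m\gtrsim k\log n\gtrsim k\log m$ to turn the $\mathrm e^{-cm}$ bound into $m^{-c_0k}$). The remaining ingredients — the hard-thresholding inequality, the restricted isometry property of Gaussian matrices, and the union-bound bookkeeping — are standard and follow the template of the unstructured analysis in \cite{taf} and the thresholded-gradient analysis in \cite{as2016clm}.
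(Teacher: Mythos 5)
Your outline follows the same skeleton as the paper's proof: (i) the best-$k$-approximation optimality of $\mathcal{H}_k$ to pass from the post- to the pre-thresholding error restricted to a support of size $O(k)$, (ii) the RIP of $\frac{1}{\sqrt m}\bm{A}$ on $3k$-sparse vectors for the linear part of the generalized gradient, and (iii) the observation that a sign mismatch forces $|\bm a_i^\ccalT\bm x|\le|\bm a_i^\ccalT\bm h|$ plus a uniform bound over the ball for the sign-flip remainder (the paper simply imports such a bound as Lemma~\ref{le:smallprob} from \cite{pwf} instead of re-deriving it with a net argument, but that is the same ingredient).

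There is, however, a genuine gap in how you assemble these pieces. Your ``restricted regularity condition'' $\langle\nabla\ell(\bm z),\bm h\rangle\ge(1-\delta)\|\bm h\|_2^2$ and $\|(\nabla\ell(\bm z))_\Omega\|_2\le(1+\delta)\|\bm h\|_2$ cannot be established with $\delta$ ``small enough.'' The sign-flip remainder contributes to $\delta$ an amount proportional to the initialization radius, not something that vanishes as $m$ grows: with $\|\bm h\|_2\le\rho_0\|\bm x\|_2$ and $\rho_0=1/10$ pinned by Lemma~\ref{le:dist}, the uniform bound on $\frac{2}{m}\sum_{i\in\mathrm{flip}}|\bm a_i^\ccalT\bm x|\,|\bm a_i^\ccalT\bm h|$ (cf.\ the constant in Lemma~\ref{le:smallprob} with $\rho_0=1/10$) is on the order of $0.2\,\|\bm h\|_2^2$, so $\delta\gtrsim 0.2$. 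But your contraction needs $3\bigl(1-2\mu(1-\delta)+\mu^2(1+\delta)^2\bigr)<1$, which after optimizing over $\mu$ requires $(1-\delta)^2/(1+\delta)^2>2/3$, i.e.\ $\delta<0.11$ (and the factor-$2$ variant of the thresholding lemma requires $\delta<0.08$). The bracket therefore never drops below $1/3$ and the induction does not close. The paper avoids this by not symmetrizing the gradient into a single $(1\pm\delta)$ condition: it bounds $\bigl\|\bigl(\bm I-\frac{\mu}{m}\sum_i\bm a_{i,\Omega}\bm a_{i,\Omega}^\ccalT\bigr)\bm h_\Omega\bigr\|_2\le\max\{1-\mu(1-\delta_{2k}),\,\mu(1+\delta_{2k})-1\}\|\bm h\|_2$, which at $\mu\approx 1$ is $O(\delta_{2k})\|\bm h\|_2$ and hence genuinely small, and then adds the sign-flip term \emph{linearly}, yielding $\rho\approx 2\bigl(O(\delta_{3k})+\zeta\bigr)$ with $\zeta\approx 0.22$, comfortably below $1$. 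Your argument is repaired by replacing the expand-the-square step with this direct triangle-inequality bound on $\|\bm h_\Omega-\mu(\nabla\ell(\bm z^t))_\Omega\|_2$, keeping the linear and sign-flip contributions separate; the remaining ingredients of your outline (RIP, nets, and union bounds for the uniform sign-flip estimate) then go through as in the paper.
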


It is worth noting that Step \ref{step:5} of Algorithm \ref{alg:SPARTA} guarantees linear convergence to the globally optimal solution $\bm{x}$ as long as the initial guess $\bm{z}^0$ lands within a small neighborhood of $\bm{x}$, \emph{regardless of whether} $\bm{z}^0$ estimates exactly the support of $\bm{x}$ or not. 

\begin{proof}[Proof of Lemma~\ref{le:thresh}]

To start, let us establish a bit of notation, which will be used only in this section. Define for all $t\ge 0$
$$\bm{d}^{t+1}:=\bm{z}^t-\frac{\mu}{m}\sum_{i=1}^m\Big(\bm{a}_i^\ccalT\bm{z}^t-\psi_i\frac{\bm{a}_i^\ccalT\bm{z}^t}{|\bm{a}_i^\ccalT\bm{z}^t|}\Big)\bm{a}_i$$ which represents the estimate prior to the hard thresholding operation in~\eqref{eq:iteration}. 
With $\mathcal{S}$ and $\hat{\mathcal{S}}^{t}$ denoting the support set of $\bm{x}$ and $\bm{z}^{t}$, respectively,
the reconstruction error $\bm{x}-\bm{z}^{t+1}$ is therefore supported on the set $\Theta^{t+1}:=\mathcal{S}\cup \hat{\mathcal{S}}^{t+1}$;
and likewise, $\bm{x}-\bm{z}^t$ is supported on $\Theta^{t}:=\mathcal{S}\cup \hat{\mathcal{S}}^{t}$. In addition, define the difference between sets $\Theta^{t}$ and $\Theta^{t+1}$ as $\Theta^{t}\setminus \Theta^{t+1}$, which consists of all elements of $\Theta^{t}$ that are not elements of $\Theta^{t+1}$.
It is then clear that $|\mathcal{S}|=|\hat{\mathcal{S}}^{t}|=k$, $|\Theta^t|\le 2k$, and $|\Theta^{t}\setminus \Theta^{t+1}|\le 2k$ as well as $|\Theta^{t}\cup \Theta^{t+1}|\le 3k$ for all $t\ge 0$. When using these sets as subscript, for instance, $\bm{d}_{\Theta^t}$, we mean vectors formed by deleting all but those elements from the vector other than those in the set.

The proof of Lemma~\ref{le:thresh} will be mainly based on results in \cite{taf}, and \cite{acha2009bd}, \cite{acha2009nt}. 
The former helps establishing the so-termed local regularity condition that will be key to proving linear convergence of iterative optimization algorithms to the globally optimal solutions of nonconvex optimization problems~\cite{wf}, while the latter two offer a standard approach to dealing with the nonlinear hard thresholding operator involved in our proposed SPARTA algorithm.
Specifically, 
based on the triangle inequality of the vector $2$-norm, one arrives at
\begin{align}
\big\|\bm{x}_{\Theta^{t+1}}-\bm{z}^{t+1}_{\Theta^{t+1}}\big\|_2&=\big\|\bm{x}_{\Theta^{t+1}}-\bm{d}^{t+1}_{\Theta^{t+1}}+\bm{d}^{t+1}_{\Theta^{t+1}}-\bm{z}^{t+1}_{\Theta^{t+1}}\big\|_2\nonumber\\
&\le \big\|\bm{x}_{\Theta^{t+1}}-\bm{d}^{t+1}_{\Theta^{t+1}}\big\|_2+\big\|\bm{z}^{t+1}_{\Theta^{t+1}}-\bm{d}^{t+1}_{\Theta^{t+1}}\big\|_2\label{eq:triangle}
\end{align}
where in the last inequality the first term denotes the distance of $\bm{x}_{\Theta^{t+1}}$ to the estimate $\bm{d}^{t+1}_{\Theta^{t+1}}$ before hard thresholding, and the second denotes the distance between $\bm{d}^{t+1}_{\Theta^{t+1}}$ and its best $k$-approximation $\bm{z}^{t+1}_{\Theta^{t+1}}$ because $\bm{z}_{\Theta^{t+1}}^{t+1}$ has cardinality equal to $k$. 
The optimality of $\bm{z}^{t+1}_{\Theta^{t+1}}$ implies $\|\bm{z}^{t+1}_{\Theta^{t+1}}-\bm{d}_{\Theta^{t+1}}^{t+1}\|_2\le \|\bm{x}_{\Theta^{t+1}}-\bm{d}_{\Theta^{t+1}}^{t+1}\|_2$. Plugging the latter inequality back into \eqref{eq:triangle} yields
\begin{equation}
\big\|\bm{x}_{\Theta^{t+1}}-\bm{z}^{t+1}_{\Theta^{t+1}}\big\|_2\le 2\big\|\bm{x}_{\Theta^{t+1}}-\bm{d}^{t+1}_{\Theta^{t+1}}\big\|_2
\label{eq:2times}.
\end{equation}

Define the estimation error $\bm{h}^{t}:=\bm{x}-\bm{z}^{t}$.
Rewriting and substituting 
\begin{align*}
\bm{d}^{t+1}&=\bm{z}^{t}-\frac{\mu}{m}\sum_{i=1}^m\big(\bm{a}_i^\ccalT\bm{z}^{t}-\bm{a}_i^\ccalT\bm{x}\big)
\bm{a}_i+\frac{\mu}{m}\sum_{i=1}^m\Big(\frac{\bm{a}_i^\ccalT\bm{z}^{t}}{|\bm{a}_i^\ccalT\bm{z}^{t}|}-\frac{\bm{a}_i^\ccalT\bm{x}}{|\bm{a}_i^\ccalT\bm{x}|}\Big)|\bm{a}_i^\ccalT\bm{x}|\bm{a}_i
\end{align*}
 into \eqref{eq:2times} leads to
\begin{align}
&\frac{1}{2}\|\bm{h}^{t+1}_{\Theta^{t+1}}\|_2
	\le \Big\|\bm{h}_{\Theta^{t+1}}^t-\frac{\mu}{m}\sum_{i=1}^m\bm{a}_i^\ccalT\bm{h}^t\bm{a}_{i,\Theta^{t+1}}
-\frac{\mu}{m}\sum_{i=1}^m\Big(\frac{\bm{a}_i^\ccalT\bm{z}^{t}}{|\bm{a}_i^\ccalT\bm{z}^{t}|}-\frac{\bm{a}_i^\ccalT\bm{x}}{|\bm{a}_i^\ccalT\bm{x}|}\Big)|\bm{a}_i^\ccalT\bm{x}|\bm{a}_{i,\Theta^{t+1}}\Big\|_2\nonumber\\
	&=\Big\|\bm{h}_{\Theta^{t+1}}^t-\frac{\mu}{m}\sum_{i=1}^m\bm{a}_{i,\Theta^{t+1}}\bm{a}_{i,\Theta^{t+1}}^\ccalT\bm{h}_{\Theta^{t+1}}^t-\frac{\mu}{m}\sum_{i=1}^m\bm{a}_{i,\Theta^{t+1}}\bm{a}_{i,\Theta^{t}\setminus\Theta^{t+1}}^\ccalT\bm{h}_{\Theta^{t}\setminus\Theta^{t+1}}^t
	\nonumber\\
	& \quad-\frac{\mu}{m}\sum_{i=1}^m\Big(\frac{\bm{a}_i^\ccalT\bm{z}^{t}}{|\bm{a}_i^\ccalT\bm{z}^{t}|}-\frac{\bm{a}_i^\ccalT\bm{x}}{|\bm{a}_i^\ccalT\bm{x}|}\Big)|\bm{a}_i^\ccalT\bm{x}|\bm{a}_{i,\Theta^{t+1}}\Big\|_2\nonumber\\
	&\le \Big\|\bm{h}_{\Theta^{t+1}}^t-\frac{\mu}{m}\sum_{i=1}^m\bm{a}_{i,\Theta^{t+1}}\bm{a}_{i,\Theta^{t+1}}^\ccalT\bm{h}_{\Theta^{t+1}}^t\Big\|_2+\Big\|\frac{\mu}{m}\sum_{i=1}^m\bm{a}_{i,\Theta^{t+1}}\bm{a}_{i,\Theta^{t}\setminus\Theta^{t+1}}^\ccalT\bm{h}_{\Theta^{t}\setminus\Theta^{t+1}}^t\Big\|_2\nonumber\\
	& \quad+\Big\|\frac{\mu}{m}\sum_{i=1}^m\!\Big(\frac{\bm{a}_i^\ccalT\bm{z}^{t}}{|\bm{a}_i^\ccalT\bm{z}^{t}|}\!-\!\frac{\bm{a}_i^\ccalT\bm{x}}{|\bm{a}_i^\ccalT\bm{x}|}\Big)|\bm{a}_i^\ccalT\bm{x}|\bm{a}_{i,\Theta^{t+1}}\Big\|_2\label{eq:expand}
\end{align}
where the equality follows from re-expressing
$\bm{a}_i^\ccalT\bm{h}^{t}=\bm{a}_{i,\Theta^{t}}^\ccalT\bm{h}^{t}_{\Theta^{t}}=
\bm{a}_{i,\Theta^{t+1}}^\ccalT\bm{h}^t_{\Theta^{t+1}}+\bm{a}_{i,\Theta^t\setminus\Theta^{t+1}}^\ccalT\bm{h}^{t}_{\Theta^{t}\setminus\Theta^{t+1}}$ since $\bm{h}^{t}=\bm{x}-\bm{z}^{t}$ is supported on $\Theta^{t}$. The last inequality is readily obtained with triangle inequality of the $\ell_2$-norm.

The task now remains to establish upper bounds for the three terms appearing on the right hand side of \eqref{eq:expand}, which will be the subject for the rest of this section. 
Toward this end, let us recall the concept of the so-called restricted isometry property (RIP) condition in compressive sampling \cite{tit2005candes}. For each integer $ s=1,\,2,\,\ldots, k$, define the isometry constant $0<\delta_s<1$ of a matrix $\bm{\Phi}\in\mathbb{R}^{m\times n}$ as the smallest quantity such that the following holds for all $k$-sparse vectors $\bm{v}\in\mathbb{R}^n$ \cite{tit2005candes,acha2009nt}:
\begin{equation}
	(1-\delta_k)\|\bm{v}\|_2^2\le \|\bm{\Phi}\bm{v}\|_2^2\le (1+\delta_k)\|\bm{v}\|_2^2\label{eq:rip}.
\end{equation}

For Gaussian matrix $\bm{A}\in\mathbb{R}^{m\times n}$ whose entries are i.i.d. standard normal variables, then $\frac{1}{\sqrt{m}}\bm{A}$ satisfies the RIP with constant $\delta_{3k}\le \epsilon$ with probability at least $1-{\rm e}^{-c_0'm}$, provided that $m\ge C_1'\epsilon^{-2}(3k)\log(n/(3k))$ for certain universal constants $c_0',\,C_1'>0$ \cite{tit2005candes},  \cite[Eq. (1.2)]{acha2009nt}. 
Furthermore, if $\mathcal{K}\subsetneqq\{1,\,2,\,\ldots,\,n\}$ is a set of $3k$ indices or fewer, the following properties of $\bm{A}$ hold true \cite[Prop. 3.1]{acha2009nt}:
\begin{enumerate}
	\item[\textbf{P1)}] $\|\bm{A}_{\mathcal{K}}^\mathcal{T}\bm{u}\|_2\le \sqrt{(1+\delta_{3k})m}\|\bm{u}\|_2$, for all $\bm{u}\in\mathbb{R}^m$;
	\item[\textbf{P2)}] $(1-\delta_{3k})m\|\bm{v}\|_2\le \|\bm{A}_{\mathcal{K}}^\ccalT\bm{A}_{\mathcal{K}}\bm{v}\|_2\le (1+\delta_{3k})m\|\bm{v}\|_2$, for all at most $3k$-sparse vectors $\bm{v}\in\mathbb{R}^n$;
	\item[\textbf{P3)}] $\|\bm{A}_{\mathcal{B}}^\ccalT\bm{A}_{\mathcal{D}}\|_2\le \delta_{3k}$, where $\mathcal{B}$ and $\mathcal{D}$ are disjoint sets of combined cardinality not exceeding $3k$;
	\item[\textbf{P4)}] $\|\bm{A}_{\mathcal{B}\cup \mathcal{D}}^\ccalT\bm{A}_{\mathcal{B}\cup \mathcal{D}}-\bm{I}\|_2\le\delta_{3k} $.
\end{enumerate}
   
Having elaborated on the properties of RIP matrices, we are ready to derive bounds for the three terms on the right hand side of \eqref{eq:expand}. 
Regarding the first term, it is easy to check that
 \begin{align}
&\quad	\Big\|\bm{h}_{\Theta^{t+1}}^t\!-\!\frac{\mu}{m}\sum_{i=1}^m\bm{a}_{i,\Theta^{t+1}}\bm{a}_{i,\Theta^{t+1}}^\ccalT\bm{h}_{\Theta^{t+1}}^t\Big\|_2\nonumber\\
	&=\Big\|\Big(\bm{I}-\!\frac{\mu}{m}\sum_{i=1}^m\bm{a}_{i,\Theta^{t+1}}\bm{a}_{i,\Theta^{t+1}}^\ccalT\Big)\bm{h}_{\Theta^{t+1}}^t\Big\|_2\nonumber\\
	&\le\Big\|\bm{I}-\frac{\mu}{m}\sum_{i=1}^m\bm{a}_{i,\Theta^{t+1}}\bm{a}_{i,\Theta^{t+1}}^\ccalT\Big\|_2\big\|\bm{h}_{\Theta^{t+1}}^t\big\|_2\nonumber\\
	&\le \max\!\big\{1-\mu\underline{\lambda},\,\mu\widebar{\lambda}-1\big\}\big\|\bm{h}_{\Theta^{t+1}}^t\big\|_2\label{eq:1stterm}
\end{align}
where $\widebar{\lambda},\,\underline{\lambda}>0$ are the largest and smallest eigenvalue of $(1/m)\sum_{i=1}^m\bm{a}_{i,\Theta^{t+1}}\bm{a}_{i,\Theta^{t+1}}^\ccalT$, respectively. Specifically, the two inequalities in \eqref{eq:1stterm} are obtained based on the definition of the induced $2$-norm (i.e., the spectral norm) of matrices. 

Next, we estimate the eigenvalues $\widebar{\lambda}$ and $\underline{\lambda}$. Using P2, it clearly holds that
\begin{align}\label{eq:lambdamax}
\widebar{\lambda}&= \lambda_{\max}\Big(\frac{1}{m}\sum_{i=1}^m\bm{a}_{i,\Theta^{t+1}}\bm{a}_{i,\Theta^{t+1}}^\ccalT\Big)
\le 1+\delta_{2k}
\end{align}
due to $|\Theta^{t+1}|\le2k$. 
For the same reason, it further holds that
 \begin{equation}\label{eq:lambdamin}
\underline{\lambda}= \lambda_{\min}\Big(\frac{1}{m}\sum_{i=1}^m\bm{a}_{i,\Theta^{t+1}}\bm{a}_{i,\Theta^{t+1}}^\ccalT\Big)\ge 1-\delta_{2k}.
\end{equation}
Taking the results in \eqref{eq:lambdamax} and \eqref{eq:lambdamin} into \eqref{eq:1stterm} yields
\begin{align}\label{eq:1sttermfinal}
&\quad	\Big\|\bm{h}_{\Theta^{t+1}}^t\!-\frac{\mu}{m}\sum_{i=1}^m\bm{a}_{i,\Theta^{t+1}}\bm{a}_{i,\Theta^{t+1}}^\ccalT\bm{h}_{\Theta^{t+1}}^t
\Big\|_2\nonumber\\
&	\le \max\!\big\{1-\mu(1-\delta_{2k}),\,\mu(1+\delta_{2k})-1\big\}\big\|\bm{h}_{\Theta^{t+1}}^t\big\|_2.
\end{align}

For the second term in \eqref{eq:expand}, since $|\Theta^{t+1}\cup\Theta^{t}|\le 3k$, the next holds with high probability
	\begin{align}
	\label{eq:2ndterm}
	&\Big\|\frac{1}{m}\sum_{i=1}^m\bm{a}_{i,\Theta^{t+1}}\bm{a}_{i,\Theta^{t}\setminus\Theta^{t+1}}^\ccalT\bm{h}_{\Theta^{t}\setminus\Theta^{t+1}}^t\Big\|_2
\nonumber\\
	&\le \Big\|\frac{1}{m}\sum_{i=1}^m\bm{a}_{i,\Theta^{t+1}}\bm{a}_{i,\Theta^{t}\setminus\Theta^{t+1}}^\ccalT\Big\|_2
\big\|\bm{h}_{\Theta^{t}\setminus\Theta^{t+1}}^t\big\|_2\nonumber\\
&\le\delta_{3k}\big\|\bm{h}_{\Theta^{t}\setminus\Theta^{t+1}}^t\big\|_2
\end{align}
in which the first inequality arises again from
 the definition of the matrix $2$-norm.
The last inequality can be obtained by appealing to P4. 

Consider now the last term in~\eqref{eq:expand}. For convenience, define $\bm{A}_{\Theta^{t+1}}^\ccalT:=[\bm{a}_{1,\Theta^{t+1}}~\cdots~\bm{a}_{m,\Theta^{t+1}}]$ with $|\Theta^{t+1}|\le 2k$, and also
$\bm{v}^t:=[v_1^t~\cdots~v_m^t]^\ccalT$ with $v_i^t:=(\frac{\bm{a}_i^\ccalT\bm{z}^{t}}{|\bm{a}_i^\ccalT\bm{z}^{t}|}-\frac{\bm{a}_i^\ccalT\bm{x}}{|\bm{a}_i^\ccalT\bm{x}|})|\bm{a}_i^\ccalT\bm{x}|$ for $i=1,\,\ldots,\,m$. Upon rearranging terms, the induced matrix $2$-norm definition
 implies that
\begin{align}\label{eq:3rdterm}
	\Big\|\frac{1}{m}\sum_{i=1}^m\Big(\frac{\bm{a}_i^\ccalT\bm{z}^{t}}{|\bm{a}_i^\ccalT\bm{z}^{t}|}-\frac{\bm{a}_i^\ccalT\bm{x}}{|\bm{a}_i^\ccalT\bm{x}|}\Big)|\bm{a}_i^\ccalT\bm{x}|\bm{a}_{i,\Theta^{t+1}}\Big\|_2
&=
	\frac{1}{m}\big\|\bm{A}_{\Theta^{t+1}}^\ccalT\bm{v}^t\big\|_2
	\nonumber\\
	&\le \Big\|\frac{1}{\sqrt{m}}\bm{A}_{\Theta^{t+1}}^\ccalT\Big\|_2\Big\|\frac{1}{\sqrt{m}}\bm{v}^t\Big\|_2.
\end{align}
Property P1 confirms that
 the largest singular value of $\bm{A}_{\Theta^{t+1}}^\ccalT\in\mathbb{R}^{m\times 2k}$ 
satisfies $s_{\max}(\bm{A}_{\Theta^{t+1}}^\ccalT)\le (1+\delta_{2k})\sqrt{m}$ with high probability.
Therefore, the following holds with high probability  
\begin{align}\label{eq:boundthethird}
\Big\|\frac{1}{m}\sum_{i=1}^m\Big(\frac{\bm{a}_i^\ccalT\bm{z}^{t}}{|\bm{a}_i^\ccalT\bm{z}^{t}|}-\frac{\bm{a}_i^\ccalT\bm{x}}{|\bm{a}_i^\ccalT\bm{x}|}\Big)|\bm{a}_i^\ccalT\bm{x}|\bm{a}_{i,\Theta^{t+1}}\Big\|_2	\le (1+\delta_{2k})\frac{1}{\sqrt{m}}\big\|\bm{v}^t\big\|_2.
\end{align}

For convenience, define the event
\begin{align}
\mathcal{K}_i&:=\left\{
\frac{\bm{a}_i^\ccalT\bm{z}}{|\bm{a}_i^\ccalT\bm{z}|}\ne\frac{ \bm{a}_i^\ccalT\bm{x}}{|\bm{a}_i^\ccalT\bm{x}|}\right\}.
\label{eq:kevent}
\end{align}
 Then, it follows that
\begin{align}
\label{eq:boundv}
\frac{1}{m}\left\|\bm{v}^t\right\|_2^2 
&=\frac{1}{m}\sum_{i=1}^m\Big(\frac{\bm{a}_i^\ccalT\bm{z}^{t}}{|\bm{a}_i^\ccalT\bm{z}^{t}|}-\frac{\bm{a}_i^\ccalT\bm{x}}{|\bm{a}_i^\ccalT\bm{x}|}\Big)^2|\bm{a}_i^\ccalT\bm{x}|^2\nonumber\\
&\le4\cdot\frac{1}{m}\sum_{i=1}^m |\bm{a}_i^\ccalT\bm{x}|\cdot|\bm{a}_i^\ccalT\bm{h}^t|\cdot\mathbb{1}_{ \mathcal{K}_i}\nonumber\\
&\le 
\frac{40}{9}\sqrt{1+\epsilon_1}\cdot\Big( \epsilon_1 +\frac{1}{10}\sqrt{\frac{21}{20}}\Big)
\big\|\bm{h}^t\big\|_2^2
\end{align}
where the first inequality follows upon substituting $|\bm{a}_i^\ccalT\bm{x}|\le |\bm{a}_i^\ccalT\bm{h}^t|$ on the event $\mathcal{K}_i$, and using $\big(\tfrac{\bm{a}_i^\ccalT\bm{z}^{t}}{|\bm{a}_i^\ccalT\bm{z}^{t}|}-\tfrac{\bm{a}_i^\ccalT\bm{x}}{|\bm{a}_i^\ccalT\bm{x}|}\big)^2\le 4$. The last inequality can be obtained by appealing to Lemma  \ref{le:smallprob} in the Appendix adapted from \cite[Lemma 7.17]{pwf}, which holds for all $(2k)$-sparse vectors $\bm{h}\in\mathbb{R}^n$. This result has also been employed in the recent sparse phase retrieval approach reported in  \cite{2017hedge}. Here, we set $\epsilon_0=1/10$ in \eqref{eq:smallprob}, and $\epsilon_1>0$ can take any sufficiently small values.

Plugging the inequality in \eqref{eq:boundv} into \eqref{eq:boundthethird} leads to
\begin{align}
\label{eq:3rdtermfinal}
&\quad	\Big\|\frac{1}{m}\sum_{i=1}^m\Big(\frac{\bm{a}_i^\ccalT\bm{z}^{t}}{|\bm{a}_i^\ccalT\bm{z}^{t}|}-\frac{\bm{a}_i^\ccalT\bm{x}}{|\bm{a}_i^\ccalT\bm{x}|}\Big)|\bm{a}_i^\ccalT\bm{x}|\bm{a}_{i,\Theta^{t+1}}\Big\|_2\nonumber\\
&\le (1+\delta_{2k})\cdot \sqrt{\frac{40}{9}}\sqrt{1+\epsilon_1}\cdot\Big( \epsilon_1 +\frac{1}{10}\sqrt{\frac{21}{20}}\Big)
\big\|\bm{h}^t\big\|_2\nonumber\\
&:=(1+\delta_{2k})
\zeta\big\|\bm{h}^t\big\|_2
\end{align}
where the constant is defined as
 $$\zeta:=\sqrt{\frac{40}{9}}\sqrt{1+\epsilon_1}\cdot\Big( \epsilon_1 +\frac{1}{10}\sqrt{\frac{21}{20}}\Big).$$
Substituting the three bounds in~\eqref{eq:1sttermfinal}, \eqref{eq:2ndterm}, and \eqref{eq:3rdtermfinal} into \eqref{eq:expand}, we obtain
\begin{align}
\big\|\bm{h}^{t+1}&\big\|_2\!\le 2\max\!\left\{1\!-\mu(1\!-\delta_{2k}),\,\mu(1+\delta_{2k})-\!1\right\}\!\big\|\bm{h}_{\Theta^{t+1}}^t\big\|_2
+2\mu\delta_{3k}\big\|\bm{h}_{\Theta^{t}\setminus\Theta^{t+1}}^t\big\|_2+2\mu(1+\delta_{2k})\zeta
\big\|\bm{h}^t\big\|_2\nonumber\\
&\le 2\sqrt{2}\max\!\big\{\!\max\!\left\{1-\mu(1\!-\delta_{2k}),\,\mu(1\!+\delta_{2k})-1\right\},\mu\delta_{3k}\big\}\|\bm{h}^t\|_2 +2\mu(1+\delta_{2k})\zeta
\big\|\bm{h}^t\big\|_2\nonumber\\
&\le 2\Big[ \sqrt{2}\max\big\{\!\max\!\left\{1-\!\mu(1\!-\delta_{2k}),\,\mu(1+\!\delta_{2k})-\!1\right\}, \mu\delta_{3k}
\big\} +\mu(1+\delta_{2k})\zeta
\Big]\big\|\bm{h}^t\big\|_2\nonumber\\
&:= \rho\big\|\bm{h}^t\big\|_2
\label{eq:finalbound}
\end{align}
where the second inequality follows from  $$\big\|\bm{h}_{\Theta^{t+1}}^t\big\|_2+\big\|\bm{h}^t_{\Theta^{t}\setminus\Theta^{t+1}}\big\|_2\le \sqrt{2}\,\big\|\bm{h}^t\big\|_2$$ over disjoint sets $\Theta^{t+1}$ and $\Theta^{t}\setminus\Theta^{t+1}$. To ensure linear convergence, it suffices to choose a constant step size $\mu>0$ such that 
\begin{align*}
\rho=2\big[ \sqrt{2}\max&\big\{\!\max\!\left\{1-\mu(1-\delta_{2k}),\,\mu(1+\delta_{2k})-1\right\}, \mu\delta_{3k}
\big\} +\mu(1+\delta_{2k})\zeta\big]<1.
\end{align*}

For sufficiently small $\delta_{3k}>0$ and $\epsilon_1>0$, one has $\nu:=1-\rho\in (0,1)$, which justifies the linear convergence result in \eqref{eq:thm}.   
\end{proof}

Theorem~\ref{thm:exact} can be directly implied by combining Lemmas~\ref{le: inverse},~\ref{le:dist}, and \ref{le:thresh}. In fact, Lemma~\ref{le: inverse} ensures exact support recovery so that the orthogonality-promoting initialization can be effectively performed on the equivalent dimension-reduced data samples. Lemma~\ref{le:dist} guarantees that the sparse initialization attained based on the dimensional-reduced data lands within a small neighborhood of the globally optimal solution (this region is also termed basin of attraction; see e.g.,~\cite{twf}, 
\cite{tit2016sl}, 
\cite{2016pkcs} for more details) with high probability. Starting from any point within the basin of attraction, Lemma~\ref{le:thresh} confirms that successive iterates of SPARTA will be dragged toward the globally optimal solution at a linear rate provided that the step size and the truncation threshold are appropriately selected.

\section{Concluding Remarks}
\label{sec:con}

This paper contributed a sparse truncated amplitude flow (SPARTA) algorithm for solving PR of sparse signals. SPARTA initially recovers the support of the underlying sparse signal, which is used to obtain a sparse orthogonality-promoting initialization using power iterations restricted on the estimated support; subsequently, SPARTA refines the initialization by means of hard thresholding based truncated gradient iterations  to ensure overall simplicity and scalability. SPARTA enjoys provably exact recovery as soon as the number of noiseless Gaussian measurements exceeds a certain bound. In contrast to state-of-the-art algorithms, such as AltMinPhase and TWF, SPARTA requires the same sample size but can afford lower computational complexity. Simulated tests corroborate markedly improved recovery performance and computational efficiency of SPARTA relative to existing alternatives. 

A few timely and pertinent extensions can be listed at this point. 
Instead of enforcing the $\ell_0$-pseudonorm constraint and the hard thresholding operation in SPARTA, it is worth investigating sparse PR by minimizing the empirical risk function \eqref{eq:cost} with convex or nonconvex sparsity-promoting regularization terms, e.g., the (reweighted) $\ell_1$-norm of the optimization variables. Developing stochastic optimization algorithms for both stages amenable to large-scale implementations is also pertinent. 
Generalizing SPARTA and our analytical results to robust sparse PR and matrix recovery with outliers constitute worthwhile future directions too \cite{mtwf,tsp2017lu,tsp2017chi}.

\section*{Appendix: Supporting Lemmas}


\begin{lemma}[\cite{jtp2003b}]
	\label{le:bounded}
	For i.i.d. zero-mean random variables $X_1,\,X_2,\,\ldots,\,X_m$, if there exists some nonrandom constant $b>0$ such that $X_i\le b$ for $1\le i\le m$, and $\mathbb{E}[X_i^2]=v^2$, then the following holds 
	\begin{equation}
		\label{eq:bounded}
		{\rm Pr}(X_1+\cdots+X_m\ge y)\le {\rm min}\Big(\exp\big(-\frac{y^2}{2\sigma^2}\big),c_0-c_0\Phi\big(\frac{y}{\sigma}\big)
		\Big)
	\end{equation}
	for $\sigma^2:=m\max(b^2,\,v^2)$, and the cumulative distribution function of the standard normal distribution $\Phi(\cdot)$, where one can take $c_0=25$.
\end{lemma}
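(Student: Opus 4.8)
The plan is to establish this by following the route of \cite{jtp2003b}: the statement is a one-sided Bennett/Bernstein-type tail bound, and I would reduce the tail of $S_m:=X_1+\cdots+X_m$ to that of a sum of i.i.d.\ two-atom random variables, for which everything is explicit, and then read off both a sub-Gaussian estimate and a Gaussian-tail comparison from the classical large-deviations analysis of binomial tails.

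First I would set up the Laplace-transform machinery. Fixing $\lambda>0$ and using $X_i\le b$ together with the monotonicity of $s\mapsto(e^{s}-1-s)/s^{2}$ on $\mathbb{R}$, one has $e^{\lambda x}\le 1+\lambda x+x^{2}(e^{\lambda b}-1-\lambda b)/b^{2}$ for every $x\le b$; taking expectations and invoking $\mathbb{E}[X_i]=0$ and $\mathbb{E}[X_i^{2}]=v^{2}$ gives
\[
\mathbb{E}\,e^{\lambda X_i}\;\le\;1+\frac{v^{2}}{b^{2}}\big(e^{\lambda b}-1-\lambda b\big)\;\le\;\exp\!\Big(\tfrac{v^{2}}{b^{2}}\big(e^{\lambda b}-1-\lambda b\big)\Big),
\]
hence the same bound on $\mathbb{E}\,e^{\lambda S_m}$ after multiplying over $i$, and Markov's inequality applied to $e^{\lambda S_m}$ then yields Bennett's inequality ${\rm Pr}(S_m\ge y)\le\exp\!\big(-\tfrac{mv^{2}}{b^{2}}\,h(\tfrac{by}{mv^{2}})\big)$ with $h(u)=(1+u)\log(1+u)-u$.

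The crucial observation is that the displayed Laplace bound is \emph{attained}, for each $\lambda>0$, by the two-atom law $\mu^{\star}$ placing mass $v^{2}/(b^{2}+v^{2})$ at $b$ and mass $b^{2}/(b^{2}+v^{2})$ at $-v^{2}/b$ (the classical extremal distribution for $\mathbb{E}\,e^{\lambda X}$ among laws on $(-\infty,b]$ with the prescribed first two moments, identified via a Krein--Markov/Richter--Rogosinski extreme-point argument on the moment body). Hence ${\rm Pr}(S_m\ge y)$ is dominated by the corresponding probability when the $X_i$ are i.i.d.\ $\mu^{\star}$, and that probability is exactly a shifted, rescaled binomial tail: with $p:=v^{2}/(b^{2}+v^{2})$ and $q:=q(y)=(by+mv^{2})/(m(b^{2}+v^{2}))$ one gets ${\rm Pr}(S_m\ge y)\le{\rm Pr}\big({\rm Bin}(m,p)\ge mq\big)$. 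From here I would invoke the Chernoff--Hoeffding estimate $\le e^{-mD(q\|p)}$ together with a quantitative lower bound on the Kullback--Leibler divergence $D(q\|p)$ valid throughout the relevant range of $(p,q)$; substituting $q-p=by/(m(b^{2}+v^{2}))$ and $p(1-p)=v^{2}b^{2}/(b^{2}+v^{2})^{2}$ makes the exponent collapse to $y^{2}/(2mv^{2})$, and repeating the construction with the two-atom law supported near the other endpoint covers the case $v^{2}<b^{2}$ with $b^{2}$ in place of $v^{2}$ — giving the stated proxy $\sigma^{2}=m\max(b^{2},v^{2})$. The Gaussian-tail form $c_{0}(1-\Phi(y/\sigma))$ I would obtain the same way, now retaining the polynomial prefactor in a Stirling/saddle-point estimate of the binomial tail and comparing term by term with the normal tail $1-\Phi$; all the numerical slack — from $1+x\le e^{x}$, from the extremal reduction, and from the Stirling bounds — is absorbed into the universal constant, and careful bookkeeping produces $c_{0}=25$.

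The easy parts are the Laplace bound and Markov's inequality. The hard part will be the two nonroutine ingredients: (i) proving the extremal property of $\mu^{\star}$, i.e.\ that replacing the $X_i$ by the two-atom model is lossless — a moment-problem argument that must be made for the convex increasing test function $e^{\lambda x}$ (and, for the $1-\Phi$ form, for the truncated powers $(x-t)_{+}^{\ell}$ exploited in the sharpened version); and (ii) extracting the clean variance proxy $\sigma^{2}=m\max(b^{2},v^{2})$ with a single universal constant. A plain Chernoff/Bennett optimization does \emph{not} suffice for (ii): already for two-atom laws with $v^{2}$ close to $b^{2}$, Bennett's optimized exponent is strictly smaller than $y^{2}/(2\sigma^{2})$, whereas the true binomial exponent is strictly larger — it is precisely this gap that forces the extremal-distribution route and constitutes the technical core of \cite{jtp2003b}.
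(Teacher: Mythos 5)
First, a framing remark: the paper does not prove this lemma at all --- it is imported verbatim from \cite{jtp2003b} (Bentkus's tail inequality for sums/martingales with differences bounded from one side) as a supporting fact, so there is no in-paper argument to compare yours against; I can only assess your sketch on its own terms. Doing so, the sketch has a genuine gap at its central step. From the extremality of the two-atom law $\mu^{\star}$ for the exponential moments you conclude that ``${\rm Pr}(S_m\ge y)$ is dominated by the corresponding probability when the $X_i$ are i.i.d.\ $\mu^{\star}$.'' That inference is false: dominance of $\mathbb{E}\,e^{\lambda X_i}$ for all $\lambda>0$ transfers only to dominance of the \emph{Chernoff bounds}, not of the tail probabilities themselves, because the indicator $\mathbb{1}\{s\ge y\}$ is not a convex function of $s$ and cannot be controlled by an exponential-moment comparison alone. (If tail domination did follow this way, the $\Phi$-form of the bound would reduce to an elementary binomial-versus-Gaussian comparison and the theorem would be nearly free; the entire difficulty of \cite{jtp2003b} is that it does not follow.) The correct mechanism is to dominate the indicator by the convex majorants $s\mapsto (s-t)_{+}^{2}/(y-t)^{2}$ for $t<y$, note that for any convex test function the coordinate-wise replacement of $X_i$ by the two-atom variable can only increase the expectation --- this is where the moment-problem extremality legitimately enters, for convex $f$ rather than for $e^{\lambda x}$ or the indicator --- and only then compare $\inf_{t<y}\mathbb{E}(T_m-t)_{+}^{2}/(y-t)^{2}$ for the binomial-type sum $T_m$ with $1-\Phi$. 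You mention the truncated powers $(x-t)_{+}^{\ell}$ only parenthetically, while your actual chain of implications routes the tail-domination claim through the Laplace transform, where it is invalid. (Minor related inaccuracy: the quadratic Laplace bound is not \emph{attained} by $\mu^{\star}$; its moment generating function strictly exceeds that bound for $\lambda>0$, though the inequality you need does hold in the right direction.)

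Two further problems. Your closing paragraph is internally inconsistent with your derivation of the sub-Gaussian branch: you assert that the optimized Bennett/Chernoff exponent falls short of $y^{2}/(2\sigma^{2})$ when $v^{2}$ is close to $b^{2}$, yet the bound $e^{-mD(q\|p)}$ you invoke for the binomial \emph{is} the optimized Chernoff exponent of $S_m$ (the moment generating functions coincide after the extremal replacement). Either $mD(q\|p)\ge y^{2}/(2\sigma^{2})$ holds throughout the admissible range $y\le mb$ --- in which case a direct Chernoff bound on $S_m$ already yields the first branch and the two-point detour buys nothing there --- or it fails, in which case your own derivation of that branch collapses; the Hoeffding-type lower bound on $D(q\|p)$ that decides this (treated separately in the regimes $v\ge b$ and $v<b$) is exactly the quantitative step you defer without proof. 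Finally, the bookkeeping that produces the explicit constant $c_{0}=25$ is entirely postponed; since the lemma is only usable downstream with a concrete constant, this is not a cosmetic omission.
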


\begin{lemma}[\cite{as2000lm}]
	\label{le:2000}
	Let $X_1,\,X_2,\,\ldots,\,X_m$ be i.i.d. Gaussian random variables with zero mean and variance $1$, and $b_1,\,b_2,\,\ldots,b_m$ be nonnegative.  
The following inequality holds for any $\epsilon>0$
	\begin{equation}
		\label{eq:2000}
{\rm Pr}\Big(\sum_{i=1}^m b_i(X_i^2-\!1)\!\ge\! 2\Big(\sum_{i=1}^mb_i^2\Big)^{\frac{1}{2}}\sqrt{\epsilon}+\!2\big(\!\max_{1\le i\le m}b_i\big)\epsilon\Big)\!\le\! \exp(-\epsilon).
	\end{equation}
\end{lemma}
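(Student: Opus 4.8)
The plan is to use the Chernoff (exponential-moment) method, the standard route for one-sided tails of weighted sums of $\chi_1^2$ variables (this is the Laurent--Massart bound). Write $Y:=\sum_{i=1}^m b_i(X_i^2-1)$ and assume $d:=\max_{1\le i\le m}b_i>0$, since otherwise $Y=0$ and there is nothing to prove. A one-dimensional Gaussian integral gives $\mathbb{E}\,e^{\lambda(X_i^2-1)}=e^{-\lambda}(1-2\lambda)^{-1/2}$ for $\lambda<1/2$, so by independence $\mathbb{E}\,e^{\lambda Y}=\prod_{i=1}^m e^{-\lambda b_i}(1-2\lambda b_i)^{-1/2}$ whenever $0\le\lambda<1/(2d)$, and hence $\log\mathbb{E}\,e^{\lambda Y}=\sum_{i=1}^m\bigl[-\lambda b_i-\tfrac12\log(1-2\lambda b_i)\bigr]$.

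Next I would record the elementary inequality $-u-\tfrac12\log(1-2u)\le \tfrac{u^2}{1-2u}$ for $u\in[0,1/2)$; it follows since the two sides agree at $u=0$ and the difference has nonnegative derivative $2u^2/(1-2u)^2$. Applying it with $u=\lambda b_i$ and bounding each denominator below by $1-2\lambda d$ yields $\log\mathbb{E}\,e^{\lambda Y}\le \tfrac{\lambda^2 a}{1-2\lambda d}$ with $a:=\sum_{i=1}^m b_i^2$. Markov's inequality then gives, for every $\lambda\in[0,1/(2d))$, $\ \mathbb{P}(Y\ge t)\le \exp\!\bigl(-\lambda t+\tfrac{\lambda^2 a}{1-2\lambda d}\bigr)$.

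Finally I would specialize to $t=2\sqrt{a\epsilon}+2d\epsilon$ (which is exactly the right-hand side in the lemma, since $a^{1/2}=(\sum b_i^2)^{1/2}$ and $d=\max_i b_i$) and choose $\lambda^\star=\sqrt\epsilon/(\sqrt a+2d\sqrt\epsilon)$; because $a+2dt=(\sqrt a+2d\sqrt\epsilon)^2$ this is precisely the stationary point of the exponent, obtained by the substitution $s=1-2\lambda d$ whose optimal value is $\sqrt{a/(a+2dt)}$. One checks $\lambda^\star d<1/2$ and, by a short algebraic simplification, $\lambda^\star t-\tfrac{(\lambda^\star)^2 a}{1-2\lambda^\star d}=\epsilon$, whence $\mathbb{P}(Y\ge t)\le e^{-\epsilon}$, which is the claim. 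The only mildly delicate points are the logarithmic inequality above and the bookkeeping that makes $\lambda^\star$ land exactly on the exponent $-\epsilon$; neither is deep, and in fact the explicit optimization can be skipped altogether by simply plugging $\lambda^\star$ into the Chernoff bound and verifying the resulting exponent directly.
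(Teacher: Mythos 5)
Your proof is correct: the Chernoff/exponential-moment argument, the logarithmic inequality $-u-\tfrac12\log(1-2u)\le u^2/(1-2u)$, and the choice $\lambda^\star=\sqrt\epsilon/(\sqrt a+2d\sqrt\epsilon)$ all check out, and the exponent does simplify exactly to $-\epsilon$. The paper itself gives no proof of this lemma --- it is imported verbatim from Laurent and Massart --- and your argument is precisely the standard proof from that source, so there is nothing further to compare.
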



\begin{lemma}~\cite[Lemma 7.17]{pwf}
	\label{le:smallprob}
		For any $k$-sparse $\bm{x}\in\mathbb{R}^n$ supported on $\mathcal{S}$, assume noise-free measurements $\psi_i=|\bm{a}_i^\ccalT\bm{x}|$ generated from i.i.d. Gaussian sampling vectors $\bm{a}_i\sim\mathcal{N}(\bm{0},\bm{I}_n)$, $i=1,\,2,\,\ldots,\,m$.
Fixing any $\epsilon_1>0$, and for all $(2k)$-sparse $\bm{h}\in\mathbb{R}^n$, 
 the following holds with probability at least $1-3{\rm e}^{-c_5 m}$
	\begin{align}
\frac{1}{m}\sum_{i=1}^m\Big(\frac{\bm{a}_i^\ccalT\bm{z}}{|\bm{a}_i^\ccalT\bm{z}|}-&\frac{\bm{a}_i^\ccalT\bm{x}}{|\bm{a}_i^\ccalT\bm{x}|}\Big)|\bm{a}_i^\ccalT\bm{x}|(\bm{a}_i^\ccalT\bm{h})\nonumber\\
&\le 2\frac{\sqrt{1+\epsilon_1}}{1-\rho_0}\Big(\epsilon_1+\sqrt{\frac{21}{20}}\rho_0\Big)	\|\bm{h}\|_2^2	\label{eq:smallprob}
\end{align}
for all $\bm{z}\in\mathbb{R}^n$ obeying $\|\bm{z}-\bm{x}\|_2\le \rho_0\|\bm{x}\|_2 $, provided that $m>c_6 (2s)\log(n/(2s))$ for some fixed numerical constants $c_5,\,c_6>0$. Here, $\rho_0=1/10$.
\end{lemma}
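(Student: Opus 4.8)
The plan is to follow the standard treatment of the ``wrong-sign'' terms that arise in amplitude-flow analyses, combining a pointwise geometric reduction, a Gaussian mean computation, and a covering argument. First, note that the $i$-th summand on the left-hand side of \eqref{eq:smallprob} vanishes unless $\bm{a}_i^\ccalT\bm{z}$ and $\bm{a}_i^\ccalT\bm{x}$ have opposite signs, i.e., unless the event $\mathcal{K}_i$ of \eqref{eq:kevent} occurs; on $\mathcal{K}_i$ one has $\big(\tfrac{\bm{a}_i^\ccalT\bm{z}}{|\bm{a}_i^\ccalT\bm{z}|}-\tfrac{\bm{a}_i^\ccalT\bm{x}}{|\bm{a}_i^\ccalT\bm{x}|}\big)=-2\,\sign(\bm{a}_i^\ccalT\bm{x})$, so the summand equals $-2(\bm{a}_i^\ccalT\bm{x})(\bm{a}_i^\ccalT\bm{h})$, and, since $\bm{a}_i^\ccalT\bm{z}=\bm{a}_i^\ccalT\bm{x}+\bm{a}_i^\ccalT(\bm{z}-\bm{x})$ flips sign, necessarily $|\bm{a}_i^\ccalT\bm{x}|\le|\bm{a}_i^\ccalT(\bm{z}-\bm{x})|$. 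Hence the left-hand side is at most
\[
\frac{2}{m}\sum_{i=1}^{m}|\bm{a}_i^\ccalT\bm{x}|\,|\bm{a}_i^\ccalT\bm{h}|\,\mathbb 1\!\left\{|\bm{a}_i^\ccalT\bm{x}|\le|\bm{a}_i^\ccalT(\bm{z}-\bm{x})|\right\}.
\]
Normalizing $\|\bm{x}\|_2=1$ and writing $\bm{z}-\bm{x}=\rho\bm{u}$ with $\|\bm{u}\|_2=1$ and $\rho\le\rho_0$, monotonicity in $\rho$ lets us enlarge the indicator to $\mathbb 1\{|\bm{a}_i^\ccalT\bm{x}|\le\rho_0|\bm{a}_i^\ccalT\bm{u}|\}$, so everything reduces to bounding the truncated bilinear form $\tfrac{2}{m}\sum_i|\bm{a}_i^\ccalT\bm{x}|\,|\bm{a}_i^\ccalT\bm{h}|\,\mathbb 1\{|\bm{a}_i^\ccalT\bm{x}|\le\rho_0|\bm{a}_i^\ccalT\bm{u}|\}$ uniformly over unit $\bm{u}$ and over the $(2k)$-sparse direction $\bm{h}$.

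Second, I would compute the expectation of a single truncated term. Because $(\bm{a}_i^\ccalT\bm{x},\bm{a}_i^\ccalT\bm{u},\bm{a}_i^\ccalT\bm{h})$ is jointly Gaussian, this is an explicit low-dimensional Gaussian integral; Cauchy--Schwarz over the randomness together with the fact that the truncation event forces $|\bm{a}_i^\ccalT\bm{x}|$ to be $\mathcal{O}(\rho_0)$-small shows the mean is at most a fixed multiple of $\rho_0\|\bm{h}\|_2$, of exactly the shape appearing in \eqref{eq:smallprob}. Evaluating the integral for $\rho_0=1/10$, keeping the worst-case value of the correlation $\bm{u}^\ccalT\bm{x}$ and the factor $(1-\rho_0)^{-1}$ coming from $\|\bm{x}\|_2\le(1-\rho_0)^{-1}\|\bm{z}\|_2$, is what produces the numerical constants $\tfrac{\sqrt{1+\epsilon_1}}{1-\rho_0}$ and $\sqrt{21/20}$; I would not grind through these here, as they are the only place the specific numbers enter.

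Third, to upgrade this in-expectation estimate to a bound holding simultaneously for all admissible $(\bm{z},\bm{h})$, with $\epsilon_1$ providing the additive slack, I would use a standard $\epsilon$-net and union-bound argument. The vectors that actually enter are $\mathcal{O}(k)$-sparse: $\bm{h}$ is $(2k)$-sparse by hypothesis, and in the regime in which the lemma is invoked (the SPARTA iterates, where both $\bm{z}$ and $\bm{x}$ are $k$-sparse) the direction $\bm{z}-\bm{x}$ is supported on at most $2k$ coordinates, so the relevant parameter set is covered by a net of cardinality $\exp\!\big(\mathcal{O}(k\log(n/k))\big)$. For each fixed net point the truncated bilinear form is a sum of independent subexponential terms, so a Bernstein-type inequality gives concentration at scale ${\rm e}^{-c_5 m}$ once $m\gtrsim k\log(n/k)$; a union bound over the net yields the claim on the net, while the restricted-isometry property of $\tfrac{1}{\sqrt{m}}\bm{A}$ on $\mathcal{O}(k)$-sparse supports (already invoked in the proof of Lemma~\ref{le:thresh}) controls the ``full'' Cauchy--Schwarz surrogate $\tfrac{1}{m}\sum_i|\bm{a}_i^\ccalT\bm{u}|\,|\bm{a}_i^\ccalT\bm{h}|$ that dominates the truncated form off the net.

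The main obstacle is this last transfer step: the indicator $\mathbb 1\{|\bm{a}_i^\ccalT\bm{x}|\le|\bm{a}_i^\ccalT(\bm{z}-\bm{x})|\}$ is discontinuous in $\bm{z}$, so moving $\bm{z}$ to a nearby net point can toggle many of the active summands. The remedy I would adopt is to prove the estimate not for the hard indicator but for a slightly inflated, Lipschitz surrogate such as $\mathbb 1\{|\bm{a}_i^\ccalT\bm{x}|\le(1+\tau)|\bm{a}_i^\ccalT(\bm{z}-\bm{x})|\}$ (or a smooth interpolant of it), establish uniformity for the surrogate, and then let $\tau\downarrow0$; the inflation perturbs the constants only by $\mathcal{O}(\epsilon_1)$, which the stated slack absorbs. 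A related point is that the final bound must carry no ${\rm poly}(n)$ factor, which is exactly why exploiting the $\mathcal{O}(k)$-sparsity of the active directions is essential: a net over general $\bm{z}\in\mathbb{R}^n$ would cost a factor $n$ in the exponent and force $m\gtrsim n$ instead of $m\gtrsim k\log(n/k)$.
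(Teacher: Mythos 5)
The paper does not actually prove Lemma~\ref{le:smallprob}: the statement is imported verbatim from \cite[Lemma 7.17]{pwf}, and the text explicitly defers the proof to \cite[Page 30]{pwf} (which in turn generalizes \cite[Lemma 3]{reshaped1}), so there is no in-paper argument to compare yours against. That said, your sketch reproduces the standard route taken in those references: each summand vanishes off the wrong-sign event $\mathcal{K}_i$, on which it equals $-2(\bm{a}_i^\ccalT\bm{x})(\bm{a}_i^\ccalT\bm{h})$ and is dominated by $2|\bm{a}_i^\ccalT\bm{x}|\,|\bm{a}_i^\ccalT\bm{h}|\,\mathbb{1}\{|\bm{a}_i^\ccalT\bm{x}|\le|\bm{a}_i^\ccalT(\bm{z}-\bm{x})|\}$; the expectation of this truncated bilinear form is an explicit low-dimensional Gaussian integral whose evaluation at $\rho_0=1/10$ is where $\sqrt{21/20}$ enters; and uniformity over the admissible $(\bm{z},\bm{h})$ is obtained by a net-plus-concentration argument over $\mathcal{O}(k)$-sparse directions. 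Your observation that the covering must exploit the sparsity of $\bm{z}-\bm{x}$ (a net over an $n$-dimensional ball would force $m\gtrsim n$) is exactly right, and is the correct reading of the lemma's ``for all $\bm{z}\in\mathbb{R}^n$'' clause as it is invoked on the SPARTA iterates, where $\bm{h}=\bm{x}-\bm{z}$ is $(2k)$-sparse and $\bm{z}$ is determined by $\bm{h}$. The two places your write-up stops short are the ones you flag yourself: the Gaussian integral that produces the exact constants $\sqrt{21/20}$ and $(1-\rho_0)^{-1}$ (your attribution of the latter to $\|\bm{x}\|_2\le(1-\rho_0)^{-1}\|\bm{z}\|_2$ is a plausible guess, not a derivation --- also note the per-term mean should scale as $\rho_0\|\bm{h}\|_2^2$, not $\rho_0\|\bm{h}\|_2$), and the quantitative control of how many indicators can toggle between neighboring net points. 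Since the paper itself outsources exactly these computations to \cite{pwf}, your proposal is best regarded as a faithful sketch of the cited external proof rather than as introducing a gap relative to anything the paper establishes.
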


The proof of Lemma \ref{le:smallprob} can be found in \cite[Page 30]{pwf}, which generalizes the result of \cite[Lemma 3]{reshaped1}.

\section*{Acknowledgment}
The authors would like to thank the anonymous reviewers for their thorough review and all constructive comments and suggestions, which helped to improve the quality of the manuscript. 
The authors also thank Prof. Xiaodong Li for sharing the codes of the thresholded Wirtinger flow algorithm.

\small
\bibliographystyle{IEEEtran}
\bibliography{apower}

\end{document}